


\documentclass{article}

\usepackage{amssymb}
\usepackage{amsmath}
\usepackage{amsthm}

\theoremstyle{definition}
\newtheorem{example}{Example}
\newtheorem{definition}{Definition}

\usepackage{graphicx}

\usepackage[capitalise]{cleveref}
\usepackage{paralist}

\newtheorem{problem}{Problem}
\newtheorem{corollary}{Corollary}
\newtheorem{theorem}{Theorem}

\newtheorem{proposition}{Proposition}

\newtheorem{lemma}{Lemma}

\usepackage{times}
\pdfinfo{
/Title (Robust Draws in Balanced Knockout Tournaments)
/Author (Krishnendu Chatterjee, Rasmus Ibsen-Jensen, Josef Tkadlec) }





\title{Robust Draws in Balanced Knockout Tournaments}
\author{Krishnendu Chatterjee$^\dag$
\qquad Rasmus Ibsen-Jensen$^\dag$
\qquad Josef Tkadlec$^\dag$  \\[5pt]
{\normalsize $\strut^\dag$ IST Austria} 
\\ 
}
\begin{document}

\def\eps{\varepsilon}
\def\p{\mathcal{P}}
\def\wp{\operatorname{wp}} 
\def\mwp{\operatorname{mwp}} 
\def\wpe{\operatorname{wp}_\eps} 
\def\en{\mathbb{N}}
\def\er{\mathbb{R}}
\def\wh{\widehat}

\onecolumn

\maketitle
\begin{abstract}
Balanced knockout tournaments are ubiquitous in sports 
competitions and are also used in decision-making and 
elections.
The traditional computational question, that asks to compute 
a draw (optimal draw) that maximizes the winning probability 
for a distinguished player, has received a lot of attention. 
Previous works consider the problem where the pairwise 
winning probabilities are known precisely, while we study 
how robust is the winning probability with respect to small errors in the 
pairwise winning probabilities.
First, we present several illuminating examples to establish: 
(a)~there exist deterministic tournaments (where the pairwise 
winning probabilities are~0 or~1) where one optimal draw is 
much more robust than the other; and 
(b)~in general, there exist tournaments with slightly suboptimal 
draws that are more robust than all the optimal draws.
The above examples motivate the study of the computational 
problem of robust draws that guarantee a specified winning probability. 
Second, we present a polynomial-time algorithm for approximating
the robustness of a draw for sufficiently small errors in pairwise
winning probabilities, and obtain that the stated computational
problem is NP-complete.
We also show that two natural cases of deterministic tournaments 
where the optimal draw could be computed in polynomial time also 
admit polynomial-time algorithms to compute robust optimal draws.
\end{abstract}

\section{Introduction}\label{sec:intro}
\smallskip\noindent{\em Balanced knockout tournaments (BKTs).}
A Balanced knockout tournament (BKT) consists of $N=2^n$ players (for some positive integer $n$) and is played in $n$ rounds. In every round each remaining player plays a (win/lose) game with some other remaining player and the loser is eliminated (removed from the tournament). In the end, only one player remains and is declared the winner. 
The whole process can be visualised on a balanced binary tree (over $N$ leaves) with players starting at the leaves and winners advancing to the parent.
Such tournaments are used in many sports such as tennis, elections and other decision making processes.

\smallskip\noindent{\em Draws in BKT.}
For a given set of $N$ players, the tournament can be played in many different 
ways based on how we draw the players in the first round (i.e., on which leaf they are placed in the balanced binary tree). 
Even if we consider the draws that correspond to isomorphic labeled binary trees as equivalent, 
the number of draws grows rapidly in $N$, precisely, 
there are $\frac{N!}{2^{N-1}}$ draws.
As a numerical example, for a Grand Slam tournament in tennis 
with $N=128$, the number of distinct draws is at least $10^{177}$ 
and even if we rank the players and require that, for any $k$, the 
top $2^k$ of them do not meet until the last $k$ rounds,
we have at least $10^{144}$ distinct draws.

\smallskip\noindent{\em Computational problems.}
The traditional computational problem that has been extensively studied  
is related to obtaining draws that are most favorable for a 
distinguished player~\cite{VAS09,AGMM,SW11b}.
The input consists of an $N\times N$ comparison matrix $P$ that specifies 
the probabilities $P_{ij}$ of player $i$ beating player $j$ in a single match,
and a distinguished player~$i^*$.
The input, along with a draw, defines the winning probability that the   
distinguished player wins the whole tournament. If the comparisons 
matrix contains 0's and 1's only, a draw gives a unique winner.
The {\em probabilistic (resp. deterministic) tournament fixing problem} 
(PTFP) (resp. TFP) asks whether there exists a draw such that the 
winning probability for the distinguished player~$i^*$ is at least 
$q$ (resp., player~$i^*$ is the winner).
The TFP and PTFP problems are desired to be hard, as otherwise 
the tournament could be manipulated 
by choosing the draw that favors a specific player. 

\smallskip\noindent{\em Previous results.}
It was shown in~\cite{VAS09} that given an input comparison matrix 
and a draw, the winning probability for a player can be computed by a 
recursive procedure in $O(N^2)$ time (i.e., in polynomial time). 
Since a candidate draw for the PTFP problem is a polynomial witness, 
an NP upper bound follows for both the PTFP and the TFP problem.
An NP lower bound for the PTFP problem was shown in~\cite{VAS09,W10}, and 
finally the TFP problem was shown to be NP hard in~\cite{AGMM}.
Hence both the PTFP and the TFP problems are NP-complete.
Moreover, in~\cite{AGMM} two important special cases have been
identified (namely, constant number of player types, and 
linear ordering among players with constant number of exceptions)
where all the winning draws for player~$i^*$ can be computed in 
polynomial time for deterministic tournaments.

\smallskip\noindent{\em Robustness question.}
The previous works focused on the computational problems 
when the pairwise winning probabilities are precisely known.
However, in most practical scenarios, the winning probabilities 
in the comparison matrix are only approximation of the real probabilities.
Indeed, in all typical applications, either the probabilities are obtained 
from past samples (such as games played) or uniformly selected from a 
small subset of samples (such as in elections). 
The probabilities obtained in these ways are always at best an 
approximation of the real probabilities and subject to small errors.
This leads to the natural question about robustness (or sensitivity) 
of optimal draws for probabilistic tournaments (or winning draws in 
deterministic tournaments) in the presence of small errors in the 
pairwise winning probabilities.
Formally, given a comparison matrix $P$, and a small error term 
$\eps$, we consider {\em all} comparison matrices $P'$ where each
entry differs from $P$ by at most $\eps$.
We refer to the above set as the $\eps$-perturbation matrices of $P$.
For a draw, we consider the {\em drop} of the draw as the difference 
between the winning probability for the distinguished player $i^*$ for $P$
and the infimum of the winning probability of the $\eps$-perturbation
matrices.

\smallskip\noindent{\em Our results.}
We study the computational problems related to robust draws in TFP and PTFP. 
Our main contributions are: 
\begin{compactenum}

\item {\em Examples.} 
We present several illuminating examples for TFP and PTFP 
related to robustness. 
First, we show that there exist deterministic tournaments where
for one winning (or optimal draw) the drop is about $\frac{N \cdot \eps}{2}$,
whereas for another winning draw the drop is about $\log N \cdot \epsilon$.
In other words, one winning draw is much more robust than the other. 
Second, we show that there exist probabilistic tournaments with 
suboptimal draws that are more robust than all the optimal draws.
Motivated by the above examples we study the computational 
problem of robust draws, i.e. determining the existence of draws that 
guarantee a specified winning probability with a drop below a specified 
threshold. 

\item {\em Algorithm.} 
We present a polynomial-time algorithm to approximate the drop for 
a given draw for small $\eps>0$ (informally, for $\eps$ where
higher order terms of $\eps$ such as $\eps^2, \eps^3$ 
etc. can be ignored).

\item {\em Consequences.} 
Our algorithm has a number of consequences.
First, it establishes that the computational problem of robust 
draws for small $\epsilon$ is NP-complete.
Note that while the PTFP and TFP are existential questions (existence
of a draw), the robustness question has a quantifier alternation (existence 
of a draw such that for all $\epsilon$-perturbation matrices the drop is small), 
yet we match the complexity of the PTFP and TFP problem.
Second, our polynomial-time algorithm along with the result of~\cite{AGMM}
implies that for the two natural cases of~\cite{AGMM}, the most robust 
winning draw (if one exists) is polynomial-time computable.
\end{compactenum}
Conference version of the paper appeared in~\cite{KIT_TechRpt}.

\smallskip\noindent{\em Significance as risk-averse strategies.}
As mentioned, BKTs are often used as sports tournaments. 
After a draw is fixed, the winning probabilities determine 
the betting odds.
Since the comparison matrix can only be approximated,
a risk-averse strategy (as typically employed by humans) 
corresponds to the notion of robustness.
The notion of robustness has been studied in many different contexts,
such as for sensitivity analysis in MDPs~\cite{Puterman,FV97} as 
well as for decision making in markets~\cite{Rockafellar}.
Our algorithm provides a risk-averse approximation for balanced knockout
tournaments, for low levels of uncertainty in the probabilities.


\subsection{Related Work}
The most related previous works are: 
(a)~\cite{VAS09}, who  showed that for a fixed draw the 
probability distribution over the winners can be computed in $O(N^2)$ time;
and (b)~\cite{AGMM}, who determined the complexity of TFP and 
found special cases with polynomial-time algorithms.
Besides that, \cite{W10}  
identified various sufficient conditions for a player to be a winner
of a BKT; \cite{SW11a} considered the case when 
weak players can possibly win a BKT; \cite{SW11b}  
studied the conditions under which the tournament can be 
fixed with high probability. 
The problem of fair draws was studied in~\cite{VS11}; and the 
problem of determining the winner in unbalanced voting trees was considered
in~\cite{Lang07,Lang12}.
If only incomplete information on the preferences is known, then computing
the winner with various voting rules has been studied in~\cite{XC11,Aziz12}.
The problem of checking whether a round-robin competition can be won 
when all the matches are not yet played has been studied in~\cite{KP04,GM02}.

As compared to the existing works we consider the problem of robustness
for TFP and PTFP which has not been studied before.

\section{Preliminaries}
In this section we present the formal definitions and previous results.

\subsection{Definitions}
For the basic definitions we very
closely follow the notations of~\cite{AGMM}.

\begin{definition} [Comparison matrices]
For $N\in\en$, let $[N]:=\{1,\dots,N\}$. Consider $N$ players numbered 1 to $N$. A \textit{comparison matrix\/} is an $N\times N$ matrix $P$ such that for all $1\leq i\neq j\leq N$ we have $0\leq P_{ij}\leq 1$ and $P_{ij}+P_{ji}=1$. The entry $P_{ij}$ of the comparison matrix expresses the probability that if players $i$, $j$ play a match, player $i$ wins. Note that the entries $P_{ii}$ are not defined.
A comparison matrix is {\em deterministic} if $P_{ij}\in \{0,1\}$ for all $i,j \in [N]$ with $i\neq j$.
\end{definition}

\begin{definition} [Draws]
Let $N=2^n$ for some integer $n$. For $\sigma$ a permutation of $[N]$, an \textit{$n$-round ordered balanced knockout tournament} $T([N],\sigma)$ is a binary tree with $N$ leaf nodes labelled from left to right by $\sigma$. All ordered balanced knockout tournaments that are isomorphic to each other 
are said to have the same \textit{draw}. They are represented by a single (unordered) \textit{balanced knockout tournament} $T([N],\sigma)$, where $\sigma$ is again a permutation of $[N]$. The set of all draws is denoted by $\Sigma$.
\end{definition}

\begin{definition} [Complete tournaments]
A \textit{complete tournament} $C(P,\sigma)$ is a balanced knockout tournament $T([N],\sigma)$ together with an $N\times N$ comparison matrix $P$.
A complete tournament $C(P,\sigma)$ is called \textit{deterministic\/} if the comparison matrix $P$ is deterministic. 
The complete tournament $C(P,\sigma)$ is conducted in the following fashion. If two nodes with labels $i$, $j$ have the same parent in $T([N],\sigma)$ then players $i$, $j$ play a \textit{match}. The winner then labels the parent (i.e. the parent is labeled by $i$ with probability $P_{ij}$ and by $j$ otherwise). The \textit{winner of} $C(P,\sigma)$ is the player who labels the root node.
\end{definition}

\begin{definition} [Winning probabilities]
Given a complete tournament $C(P,\sigma)$, each player $i\in [N]$ has a probability, denoted $\wp(i,P,\sigma)$, 
of being the winner of $C(P,\sigma)$. This probability can be computed in time $O(N^2)$ via a recursive formulation~\cite{VAS09}. 
We denote by $\mwp (i,P):=\max_{\sigma\in\Sigma}\{\wp(i,P,\sigma)\}$ the maximum possible winning probability of $i$ in $C(P,\sigma)$ taken over all draws $\sigma\in\Sigma$. 
Given $P$, and $\delta\geq 0$, a draw $\sigma$ is called \textit{$\delta$-optimal} for player $i$ provided that 
$\wp(i,P,\sigma)\geq \mwp(i,P)-\delta$.
A draw is {\em optimal} if it is $\delta$-optimal for $\delta=0$.
\end{definition}

\begin{example}
Consider a complete 2-round tournament with comparison matrix $P$ and draw $\sigma=(1,2,3,4)$ as in Figure~\ref{fig:easy}.
\begin{figure}
\centering
\includegraphics[scale=0.8]{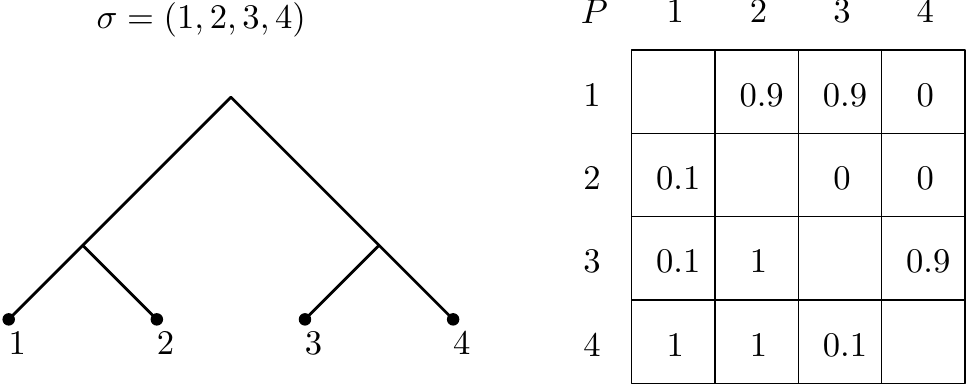}
\caption{A 2-round complete tournament with comparison matrix $P$. The draw $\sigma$ gives $\wp(1,P,\sigma)=0.9^3$.}
\label{fig:easy}
\vspace{-1em}
\end{figure}
Then the winning probabilities are
\begin{align*}
\wp(1,P,\sigma)&=0.729, \quad \wp(2,P,\sigma)=0,\\
\quad \wp(3,P,\sigma)&=0.171, \quad \wp(4,P,\sigma)=0.1
\end{align*}
and the draw $\sigma$ is optimal for player 1, as the draws $\sigma'=(1,3,2,4)$, $\sigma''=(1,4,2,3)$ give 
$\wp(1,P,\sigma')=\wp(1,P,\sigma'')=0$.
\end{example}

\subsection{Previous Results}
We now describe the main computational problems and previous results related
to them.

\begin{problem} [TFP -- Tournament Fixing Problem] 
Given a player set $N$, a deterministic comparison matrix $P$, and a distinguished player $i^*\in N$, 
does there exist a draw $\sigma$ for the player set $N$ for which $i^*$ is the winner of $C(P,\sigma)$?
\end{problem}

\begin{problem} [PTFP -- Probabilistic TFP] 
Given a player set $[N]$, a comparison matrix $P$, a distinguished player $i^*\in [N]$ and target probability $q\in[0,1]$, 
does there exist a draw $\sigma$ for the player set $[N]$ such that $\wp(i^*,P,\sigma) \geq q$?
\end{problem}

\begin{theorem} (\cite{AGMM,VAS09})
The TFP and the PTFP problems are NP-complete.
\end{theorem}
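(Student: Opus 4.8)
The plan is to prove NP-completeness of both TFP and PTFP by establishing membership in NP and then NP-hardness, citing the relevant prior work.

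\paragraph{Membership in NP.}
First I would argue the upper bound. A draw $\sigma$ is a polynomial-size object (a permutation of $[N]$, or rather an equivalence class thereof which we may still represent by a permutation). By the result of~\cite{VAS09} recalled in the preliminaries, given $C(P,\sigma)$ the winning probability $\wp(i^*,P,\sigma)$ can be computed in $O(N^2)$ time via the recursive formulation. Hence for PTFP the draw $\sigma$ serves as a polynomial witness: guess $\sigma$, compute $\wp(i^*,P,\sigma)$ in polynomial time, and accept iff it is at least $q$. (One should note the numbers stay polynomial-size: if the entries of $P$ and $q$ are given as rationals with denominators bounded by $2^{\mathrm{poly}}$, then the recursion only multiplies and adds $N-1$ such numbers along the tree, so the bit-complexity stays polynomial.) TFP is the special case $q=1$ with $P$ deterministic — or, more directly, for a deterministic $P$ the recursion assigns each node a unique label, so the winner of $C(P,\sigma)$ is computable in polynomial time and $\sigma$ is again a witness. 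Thus both problems are in NP.

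\paragraph{NP-hardness.}
Next I would invoke the known hardness reductions. The NP-hardness of PTFP was established in~\cite{VAS09,W10}, so PTFP is NP-hard and hence, with the above, NP-complete. For TFP the situation is subtler because one cannot simply take $q=1$ in a PTFP instance with a probabilistic matrix; the hardness of TFP for genuinely \emph{deterministic} comparison matrices is exactly the contribution of~\cite{AGMM}, which I would cite directly. Combining the membership argument with~\cite{AGMM} gives TFP NP-complete, and combining it with~\cite{VAS09,W10} gives PTFP NP-complete.

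\paragraph{Main obstacle.}
Since this theorem is quoted verbatim from prior work, there is no genuine new obstacle here — the ``proof'' is really a citation plus the easy NP-membership observation. The only point requiring a little care is the bit-complexity remark for PTFP (ensuring the recursively computed probabilities have polynomially bounded representations so that the comparison $\wp(i^*,P,\sigma)\geq q$ is checkable in polynomial time); everything else is immediate from the cited results. If one instead wanted a self-contained hardness proof, the hard part would be reconstructing the reduction of~\cite{AGMM} from a suitable NP-complete problem to TFP on deterministic matrices, but that is beyond what is needed to state and use this theorem in the present paper.
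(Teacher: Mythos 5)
Your proposal matches the paper's own treatment: the paper states this theorem as a cited prior result, and its introduction sketches exactly your argument — NP membership from the draw being a polynomial witness checkable via the $O(N^2)$ recursion of~\cite{VAS09}, hardness of PTFP from~\cite{VAS09,W10}, and hardness of TFP from~\cite{AGMM}. Your added remark on bit-complexity of the recursively computed probabilities is a reasonable extra precaution but not something the paper dwells on.
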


\section{Robustness: Examples and Questions}

In practical applications of PTFP and TFP, the exact values of the entries in the comparison matrix are not known, 
but only approximations are obtained. 
Hence it is of interest to find out how sensitive is the winning chance in optimal draws (or near optimal draws) 
with respect to minor changes in the comparison matrix.
We first present the basic definition and then our examples. 
Finally, we present the computational problems.

\begin{definition}[$\eps$-perturbation and the $\eps$-worst drop]
Given a comparison matrix $P$ and $\eps>0$, an $\eps$-perturbation of $P$ is any comparison matrix $P'$ 
such that $|P_{ij}-P'_{ij}|\leq \eps$ for all $1\leq i\neq j\leq N$. 
The set of all $\eps$-perturbations of $P$ is denoted $\p(P,\eps)$. 
Given a complete tournament $C(P,\sigma)$, a distinguished player $i^*$, and $\eps>0$, 
define the \textit{$\eps$-guaranteed winning probability} $\wpe(i^*,P,\sigma)$ by
$$\wpe(i^*,P,\sigma)=\inf_{P'\in \p(P,\eps)} \{\wp(i^*,P',\sigma)\}
$$
and the \textit{$\eps$-worst drop}, denoted $d_{\eps}(i^*,P,\sigma)$, by  
$\wp(i^*,P,\sigma)-\wpe(i^*,P,\sigma)$.
The smaller the drop $d_{\eps}(i^*,P,\sigma)$ the more robust (or more risk-averse) is the draw $\sigma$.
\end{definition}

\subsection{Examples}
We consider the robustness problem for sufficiently small $\eps>0$.
Intuitively, ``sufficiently small'' will allow us to ignore all higher order terms of $\eps$ such as $\eps^2, \eps^3,\dots$ A formal definition comes at the end of this section.



First we construct a deterministic tournament with a unique and nonrobust optimal draw.
We will use Proposition~\ref{prop1} in Proposition~\ref{prop2}. For brevity, the distinguished player will always be the first one, i.e. $i^\star=1$.

\begin{proposition}[Only Nonrobust Optimal Draws]\label{prop1}
For any $2^n=N\in\en$, there exists a deterministic $n$-round tournament (called a {\em hard $n$-round tournament} and denoted $\mathcal{H}_n$) with a 
comparison matrix $P$ such that for every draw $\sigma$ which makes player 1 win 
we have
$$\wpe(1,P,\sigma)\leq 1-(N-1)\eps +\eps^2\cdot Q(\eps)$$
for some
integer polynomial $Q(\eps)$.
\end{proposition}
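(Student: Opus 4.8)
\emph{The plan.} I would construct $\mathcal{H}_n$ recursively so that player~$1$ can win in essentially one way, and in that one way \emph{every} one of the $N-1$ matches of the bracket is a single point of failure; an adversary who lowers each match's winning probability by $\eps$ then forces a drop of order $(N-1)\eps$. Set $\mathcal{H}_0$ to be the one‑player tournament, and for $n\ge 1$ let $\mathcal{H}_n$ consist of player~$1$ together with $n$ disjoint \emph{spine subtrees} $S_1,\dots,S_n$, where $S_k$ is an isomorphic copy of $\mathcal{H}_{k-1}$, so $|S_k|=2^{k-1}$ and $\sum_k|S_k|=N-1$; write $w_k$ for the distinguished player of $S_k$ (its ``champion''). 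Define $P$ by: $P_{1,w_k}=1$ for each $k$ and $P_{1,j}=0$ for every other $j$; for players in distinct $S_j,S_{j'}$ with $j<j'$ the one in $S_{j'}$ wins (``the larger subtree wins''); and all remaining entries come from the recursive copies of $\mathcal{H}_{k-1}$. Let $\sigma^\star$ be the \emph{canonical draw}: player~$1$ meets $S_k$ in round~$k$ and each $S_k$ is internally arranged by the canonical draw of $\mathcal{H}_{k-1}$. A one‑line induction shows $\sigma^\star$ makes player~$1$ win (its round‑$k$ opponent is $w_k$, whom it beats).

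\emph{Rigidity: $\sigma^\star$ is the unique winning draw up to isomorphism.} The engine is the fact, provable by a short argument from ``the larger subtree wins'': in any subtree $T$ whose players all lie in $S_1\cup\dots\cup S_n$, the winner of $T$ belongs to $S_{j^\star}$, where $S_{j^\star}$ is the highest‑indexed spine subtree with $T\cap S_{j^\star}\neq\emptyset$ (a player of $S_{j^\star}$ can only be eliminated by another player of $S_{j^\star}$). Now in any draw making player~$1$ win, player~$1$'s round opponents are exactly the champions $w_1,\dots,w_n$ (the only players it beats); if $w_j$ is player~$1$'s round‑$k$ opponent it wins a size‑$2^{k-1}$ subtree $T$, and by the fact $j$ is the highest index with $T\cap S_j\neq\emptyset$, forcing $T\subseteq S_1\cup\dots\cup S_j$ and hence $2^{k-1}\le 2^{j}-1$, i.e.\ $k\le j$; the only bijection with $k\le j$ for all $j$ is the identity, so $w_k$ is player~$1$'s round‑$k$ opponent. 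Iterating the same reasoning pins $T$ down to $S_k$ and then recurses inside each $S_k$.

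\emph{Cascade: in $\sigma^\star$ a single upset is fatal.} I claim more generally that in the canonical draw of $\mathcal{H}_m$, if exactly one match is decided against the winner prescribed by $P$ and all others by $P$, the distinguished player does not win. By induction on $m$: the upset either eliminates the distinguished player directly at one of its own matches, or occurs inside a spine subtree $S_k\cong\mathcal{H}_{k-1}$ and, by the inductive hypothesis, changes the winner of $S_k$ to some player $z\neq w_k$; since the distinguished player beats no member of $S_k$ other than $w_k$, it loses to $z$. Applying this to $\mathcal{H}_n$, any single upset in $\sigma^\star$ makes player~$1$ lose.

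\emph{Conclusion, and the main obstacle.} Let $P'$ be obtained from $P$ by lowering, in each of the $N-1$ matches of $\sigma^\star$, the prescribed winner's probability from $1$ to $1-\eps$ (so the loser's rises from $0$ to $\eps$); then $P'$ is an $\eps$‑perturbation, so $\wpe(1,P,\sigma^\star)\le\wp(1,P',\sigma^\star)$. All entries of $P'$ lie in $\{0,\eps,1-\eps,1\}$, so $\wp(1,P',\sigma^\star)$ is a polynomial in $\eps$ with integer coefficients. The unique ``no‑upset'' run has probability $(1-\eps)^{N-1}$ and is a win; every ``one‑upset'' run is a loss by the cascade lemma; and the remaining runs (two or more upsets) contribute a polynomial divisible by $\eps^2$. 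Hence $\wp(1,P',\sigma^\star)=(1-\eps)^{N-1}+\eps^2\cdot(\text{integer polynomial})=1-(N-1)\eps+\eps^2 Q(\eps)$ for an integer polynomial $Q$, and by rigidity $\sigma^\star$ is the only winning draw, which is exactly the statement. I expect the main obstacle to be completing the rigidity argument: turning ``iterating the same reasoning pins $T$ down to $S_k$'' into a fully rigorous induction, i.e.\ showing the champion $w_j$ cannot win any size‑$2^{j-1}$ subtree other than $S_j$ with its canonical internal draw --- this needs careful bookkeeping, inside every $S_k$, of how many rounds each player can survive.
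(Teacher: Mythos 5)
Your proof is correct, and your tournament is in fact isomorphic to the paper's: the paper defines $\mathcal{H}_n$ via the ``spawning'' construction of \cite[Lemma 1]{AGMM} (each player beats everyone to its left except its spawner), and your spine subtrees $S_1,\dots,S_n$ are exactly the blocks $\{2^{k-1}+1,\dots,2^k\}$ of that construction, with $w_k=2^{k-1}+1$. The genuine differences are in the two inductions. For rigidity the paper halves: every player of index greater than $2^n+1$ beats every player of index less than $2^n+1$, so all of them must share a half with player $2^n+1$, forcing the split $\{1,\dots,2^n\}/\{2^n+1,\dots,2^{n+1}\}$ into two half-size hard tournaments, and then recurses; your spine argument instead pins down player~1's opponents directly via the permutation inequality $j(k)\ge k$. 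The obstacle you flag at the end is not actually there: once $T_k\subseteq S_1\cup\cdots\cup S_k$ holds for every $k$, disjointness of the $T_k$ together with $\sum_{j\le k}|T_j|=2^k-1=\sum_{j\le k}|S_j|$ yields $T_k=S_k$ by induction on $k$, and the recursion inside each $S_k$ is then immediate --- no bookkeeping of survival depths is needed. For the perturbation bound, the paper flips \emph{every} $(0,1)$ entry to $(\eps,1-\eps)$ and extracts the linear coefficient from the recursion $p_{k+1}=p_k\left[p_k(1-\eps)+(1-p_k)\eps\right]$, $p_1=1-\eps$ (so $c_{k+1}=2c_k+1$ and $c_n=2^n-1$), whereas you perturb only the $N-1$ scheduled pairs and argue combinatorially (the no-upset run wins, every one-upset run loses by your cascade lemma, everything else is $O(\eps^2)$). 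Your version is slightly longer to set up but makes transparent that the coefficient $N-1$ is exactly the number of single points of failure, and it anticipates the paper's own crucial-matches machinery (Lemma~\ref{lemm_3} and Corollary~\ref{coro_2}), of which Proposition~\ref{prop1} becomes the extreme instance where all $N-1$ matches are crucial.
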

Intuitively, the polynomial $Q$ stores higher order terms (from $\eps^2$ on). The tournament, illustrated in Figure~\ref{fig:hard}, has the property that exactly one draw makes player~1 win and if any single match changes outcome then player~1 loses.

\begin{proof} 
We use a construction from \cite[Lemma 1]{AGMM}. 
For convenience, let us repeat it here: Start with player 1. At each iteration, each player $a$ \textit{spawns\/} a player $b$ directly to their right. This is repeated until $2^n$ players are present. In the pairwise comparison matrix $P$, each player beats all players to their left except for the one that spawned them. We show that this is a hard $n$-round tournament.

\begin{figure}
\centering
\includegraphics{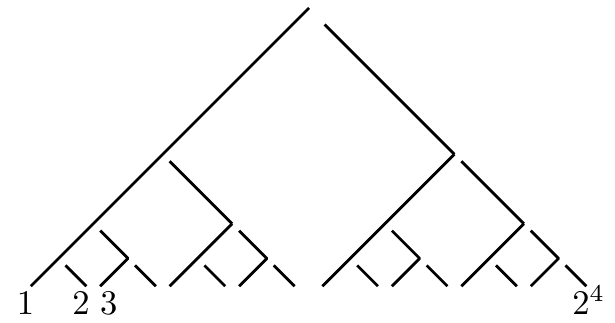}
\caption{A 4-round hard tournament $\mathcal{H}_4$. Each broken line connects nodes that are eventually labelled by the same player. Except for the depicted exceptions, each player beats all players to their left.}
\label{fig:hard}
\end{figure}

Number the players from left to right. \cite[Lemma 1]{AGMM} states that in the draw $\sigma=(1,2,\dots,2^n)$, player 1 wins. We prove that this is the only draw that makes player 1 win.

We proceed by induction on $n$. For $n=1$ the statement is trivial. Assume it is true for some fixed $n\in\en$ and take a hard $(n+1)$-round tournament with some draw making player 1 win. Call players with numbers less than $2^n+1$ \textit{small} and those with numbers greater than $2^n+1$ \textit{big}. Note that every big player beats every small player.

The players $1$ to $2^{n+1}$ are divided into left half and right half. If there exists a big player who is not in the same half as player $2^n+1$, then that half is won by a big player and hence the whole tournament is not won by player 1, a contradiction. Hence the division into halves is (in some order) $\{1,\dots,2^n\}$, $\{2^n+1,\dots,2^{n+1}\}$ and since both the subtournaments are isomorphic to the hard $n$-round tournament, we are done by induction.

Now we define an $\eps$-perturbation of $P$ that gives a small winning chance for player 1 in the draw $\sigma$. Let $P'$ be the $\eps$-perturbation where we adjusted every (0,1) match into an $(\eps,1-\eps)$ match.
Denote the probability that the winner of a $k$-round subtournament in $C(P,\sigma)$ actually wins their first $k$ rounds in $C(P',\sigma)$ by $p_k$ (this is possible since all subtrees of the same size are isomorphic). Then $$p_{k+1}=p_k\cdot \left[p_k\cdot (1-\eps) + (1-p_k)\cdot\eps\right],$$
because the ``left'' original winner has to win their first $k$ rounds and then
\begin{itemize}
\item either beat the original winner of the right subtournament, provided that he won,
\item or upset the real winner of the right subtournament, if he didn't.
\end{itemize}
Since $p_1=1-\eps$, we can express $p_n$ recursively to obtain a polynomial in $\eps$. By straightforward induction we verify that its degree equals $2^n-1$ and that the coefficient by the linear term equals $2^n-1=N-1$. Thus $$\wpe(1,P,\sigma)\leq \wp(1,P',\sigma)= 1-(N-1)\eps + \eps^2Q(\eps)$$ for some polynomial $Q(\eps)$.
\end{proof}

Next we present a concrete numerical example to illustrate Proposition~\ref{prop1}.

\begin{example} There exists a deterministic 6-round tournament with a comparison matrix $P$ such that any draw $\sigma$, which makes player 1 win, satisfies
\begin{align*}
\wp_{0.01}(1,P,\sigma)&<0.54, \quad\wp_{0.05}(1,P,\sigma)<0.07,\\
\quad \wp_{0.1}(1,P,\sigma)&<0.02.
\end{align*}
By Proposition~\ref{prop1}, the 6-round hard tournament with comparison matrix $P$ has only one draw $\sigma=(1,\dots,64)$ that makes player 1 win. Let $P'$ be the $\eps$-perturbation where we adjusted every (0,1) match into an $(\eps,1-\eps)$ match. We recursively compute $\wp(1,P',\sigma)=1-63\eps +\dots-2\ 147\ 483\ 648 \eps^{63}$
and plug in $\eps\in\{0.01,0.05,0.1\}$ to get numbers less than $0.54$, $0.07$, $0.02$, respectively.
\end{example}

We now use Proposition~\ref{prop1} to show that there exist deterministic tournaments 
where for one winning draw the drop is approximately $\frac{N\eps}{2}$, whereas for
a different winning draw the drop is approximately $\eps(2\log N-1)$, for sufficiently
small $\eps$.
In other words, some winning draws can be much more robust than other winning draws.

\begin{proposition}[Robust and Nonrobust Optimal Draws]\label{prop2} For any $2^n=N\in\en$, $n\geq 2$, there exists a deterministic $n$-round tournament (called an \textit{unbalanced $n$-round tournament} and denoted by $\mathcal{U}_n$) with a comparison matrix $P$ and draws $\sigma$, $\sigma'$ such that $\wp(1,P,\sigma)=1=\wp(1,P,\sigma')$ and
\begin{align*}
\wpe(1,P,\sigma) &\leq 1- \eps\cdot N/2 +\eps^2\cdot Q(\eps),\\
\wpe(1,P,\sigma')&> 1- \eps\cdot (2n-1)
\end{align*}
for some integer polynomial $Q(\eps)$.
\end{proposition}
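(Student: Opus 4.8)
The plan is to construct the unbalanced tournament $\mathcal{U}_n$ by combining the hard tournament $\mathcal{H}_{n-1}$ from Proposition~\ref{prop1} with a ``chain'' of trivially weak players. Concretely, I would take $N/2$ players forming a hard $(n-1)$-round tournament (with player~1 as the distinguished player, winning under its unique good draw) and $N/2$ additional players $w_1,\dots,w_{N/2}$ that lose to everyone in the hard part; among themselves the $w_i$ are ordered so that each has a unique ``spawning'' relationship, mirroring the hard construction, so that exactly one internal draw of the chain lets a designated weak player emerge. The comparison matrix $P$ sets all hard-vs-weak entries so the hard side always wins, all entries within the hard block as in $\mathcal{H}_{n-1}$, and the entries within the weak block as a second (scaled) hard structure. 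Then $\wp(1,P,\sigma)=1$ for any draw that pairs the hard block against the weak block in the final round and uses correct internal draws, so there is plenty of freedom.

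Next I would exhibit the two draws. For the \emph{nonrobust} draw $\sigma$, place the hard $(n-1)$-block in one half of the bracket; player~1 then has to survive $n-1$ rounds inside a hard tournament, and by the argument of Proposition~\ref{prop1} the $\eps$-perturbation $P'$ (turning every $(0,1)$ into $(\eps,1-\eps)$) drives the winning probability of the hard block's intended winner down to $1-(N/2-1)\eps+\eps^2(\cdots)$; one more round against the weak-block winner (which player~1 beats with probability $1-\eps$ under $P'$) multiplies in another factor, giving $\wp_\eps(1,P,\sigma)\le 1-\eps\cdot N/2+\eps^2 Q(\eps)$ after collecting terms. For the \emph{robust} draw $\sigma'$, I instead spread player~1's path so that player~1 plays only ``easy'' matches: arrange the bracket so that in each of player~1's $n$ rounds the opponent is a single weak player (or a weak block that is already reduced to one weak player), padding the other subtrees with the remaining hard/weak players so that each such subtree is won by a player that~1 beats $(0,1)$. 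Then under \emph{any} $\eps$-perturbation, player~1 wins each of their own $n$ matches with probability $\ge 1-\eps$, and the opponents' subtrees — being chains of forced $(0,1)$ wins of depth $\le n-1$ — deliver the intended opponent with probability $\ge 1-(n-1)\eps$ by a union-bound-style estimate. Multiplying, $\wp_\eps(1,P,\sigma')\ge (1-\eps)^n(1-(n-1)\eps)\ge 1-\eps(2n-1)$ by Bernoulli's inequality, which is the claimed bound.

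The main obstacle is the bookkeeping for $\sigma'$: I must design $\mathcal{U}_n$ so that there genuinely exists a winning draw in which player~1's entire path consists of easy matches \emph{and} every off-path subtree is a forced-outcome chain of depth at most $n-1$, with all cross-edges in $P$ oriented consistently (no cyclic or contradictory requirements, since $P_{ij}+P_{ji}=1$ must hold). The clean way to guarantee consistency is to make $P$ a transitive-except-for-spawns structure, as in Proposition~\ref{prop1}: give every player a rank, have higher ranks beat lower ranks except each player loses to its unique spawner; then any bracket's subtree winner is determined, and I just need to check that the specific $\sigma'$ I write down routes player~1 past weak opponents. The lower bound $\wp_\eps(1,P,\sigma')>1-\eps(2n-1)$ then follows from the two estimates above and the elementary inequality $(1-a)(1-b)\ge 1-a-b$; the strictness comes from the positive $\eps^2$ correction that Bernoulli's inequality discards. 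Finally I would note that $2n-1=2\log N-1$, matching the informal statement, and that the $\eps^2 Q(\eps)$ term in the $\sigma$ bound comes verbatim from the polynomial $p_{n-1}$ of Proposition~\ref{prop1}.
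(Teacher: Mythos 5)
Your bound for the non-robust draw $\sigma$ matches the paper's argument, but your construction of $\mathcal{U}_n$ differs from the paper's in a way that breaks the robust draw $\sigma'$, and this is a genuine gap rather than a bookkeeping issue. You make the $N/2$ added players lose to everyone in the hard block; the paper instead makes them beat everyone in the hard block \emph{except} player~1. The orientation matters. In your version, every hard player $j\in\{2,\dots,2^{n-1}\}$ that player~1 does not beat (and player~1 beats only the $n-1$ players it spawned in $\mathcal{H}_{n-1}$) can only be eliminated by another hard player, since the weak players and player~1 all lose to $j$. So essentially all of $\{2,\dots,2^{n-1}\}$ must be packed into player~1's opponent subtrees and resolved internally by the fragile spawn structure of $\mathcal{H}_{n-1}$. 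Your key estimate --- that an off-path subtree ``delivers the intended opponent with probability $\geq 1-(n-1)\eps$ by a union-bound-style estimate'' --- is only valid when the subtree's designated winner beats every other player in that subtree, so that upsets elsewhere in the subtree are irrelevant to its $n-1$ own matches. No hard player has this dominance over the other hard players: each loses to its spawner, and player~1 loses to every non-spawn. Hence the subtree absorbing the bulk of $\{2,\dots,2^{n-1}\}$ is itself a hard-tournament-like bracket in which a single upset typically promotes a player who beats player~1; the perturbation turning every $(0,1)$ match into $(\eps,1-\eps)$ already costs $\Theta(N\eps)$ there, so your $\sigma'$ does not satisfy $\wpe(1,P,\sigma')>1-(2n-1)\eps$, and your $\mathcal{U}_n$ plausibly admits no draw that does.

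The paper's construction is exactly the reversed orientation: the added players beat all of $2,\dots,2^{n-1}$ but lose to player~1. Then in $\sigma'=(2^n,2,3,\dots,2^n-1,1)$ a single added player is planted in the half containing the hard block and beats \emph{everyone} there, so under any $\eps$-perturbation it reaches the final with probability at least $(1-\eps)^{n-1}$ regardless of upsets among its prospective opponents; player~1 sits in the other half with only added players, all of whom it beats, and the final contributes one more factor of $1-\eps$, giving $(1-\eps)^{2n-1}>1-(2n-1)\eps$. The moral is that the dangerous hard players must be swept by a single dominating opponent of player~1, not by a long chain of upset-sensitive forced matches; your construction provides no such sweeper. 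The rest of your outline (the use of Proposition~\ref{prop1} for $\sigma$, the factor $(1-\eps)$ for the final, and the Bernoulli-type inequalities) is consistent with the paper.
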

\begin{proof}  Take an $(n-1)$-round hard tournament with players $1,2,\dots,2^{n-1}$. Add $2^{n-1}$ more players numbered $2^{n-1}+1$ through $2^n$ who all lose to player 1 but beat all the players 2 through $2^{n-1}$ (see Figure~\ref{fig:unbalanced}). Then we claim that $\sigma=(1,2,\dots,2^n)$ and $\sigma'=(2^n,2,3,\dots,2^n-1,1)$ fulfil the statement of the proposition.

\begin{figure}
\centering
\includegraphics[scale=1]{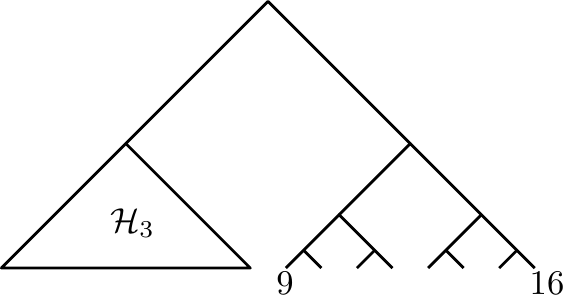}
\caption{An unbalanced $4$-round tournament $\mathcal{U}_4$. The players in the right subtree beat everyone from the left subtree but the player 1.}
\label{fig:unbalanced}
\end{figure}

Clearly, player 1 wins both $C(P,\sigma)$ and $C(P,\sigma')$.

To upper bound $\wpe(1,P,\sigma)$ it is enough to find one $\eps$-perturbation that reduces the winning chance of player 1 sufficiently. Let $P'$ again be the comparison matrix obtained by replacing each $(0,1)$ match by an $(\eps,1-\eps)$ match. By Proposition~\ref{prop1}, the probability of player 1 winning her first $n-1$ rounds is $p_{n-1}= 1-(2^{n-1}-1)\eps+\eps^2Q'(\eps)$ for some polynomial $Q'(\eps)$. The probability of winning the final is then always $1-\eps$ which altogether gives
\begin{align*}
\wpe(1,P,\sigma)&\leq  [1-(2^{n-1}-1)\eps +\eps^2Q'(\eps)]\cdot (1-\eps) \\
&=1-2^{n-1}\eps +\eps^2Q(\eps)
\end{align*}
for some polynomial $Q(\eps)$.

For $\wpe(1,P,\sigma')$, no matter which $P''\in\p(P,\eps)$ we take, players $2^n$ and $1$ win their first $n-1$ matches with probability at least $(1-\eps)^{n-1}$ each and then player 1 beats player $2^n$ with probability at least $1-\eps$ again, giving
$$\qquad\wpe(1,P,\sigma')\geq (1-\eps)^{2n-1}> 1-(2n-1)\eps.\qquad \qedhere$$
\end{proof}

We now illustrate Proposition~\ref{prop2} with a numerical example.

\begin{example} There exists a deterministic 6-round tournament with a comparison matrix $P$ and draws $\sigma$, $\sigma'$ such that $\wp(1,P,\sigma)=1=\wp(1,P,\sigma')$ and
\begin{align*}
\wp_{0.01}(1,P,\sigma)< 0.73 &< 0.89<\wp_{0.01}(1,P,\sigma'),\\
\wp_{0.05}(1,P,\sigma)< 0.23 &< 0.56<\wp_{0.05}(1,P,\sigma'),\\
\wp_{0.1 }(1,P,\sigma)< 0.08 &< 0.31<\wp_{0.1 }(1,P,\sigma').
\end{align*}
Indeed, take the unbalanced 6-round tournament with comparison matrix $P$ and the draws $\sigma=(1,2,\dots,64)$, $\sigma'=(64,2,3,\dots,63,1)$. By Proposition 2, player one wins both $C(P,\sigma)$ and $C(P,\sigma')$. Consider the $\eps$-perturbation $P'$ of $P$ that replaces each $(0,1)$ match by an $(\eps,1-\eps)$ match. Then for any $\eps>0$ we have
\begin{align*}
\wpe(1,P,\sigma)&\leq \wp(1,P',\sigma)= p_5\cdot (1-\eps)= \\
&=1-64\eps+\dots +2\ 147\ 483\ 648 \eps^{64} 
\end{align*}
which, after plugging in $\eps\in\{0.01, 0.05, 0.1\}$, gives the desired inequalities.

On the other hand, the proof of the Proposition~\ref{prop2} implies that for any $\eps>0$ we have $\wpe(1,P,\sigma')\geq (1-\eps)^{11}$ which gives the remaining inequalities.
\end{example}

\smallskip\noindent{\em Probabilistic setting.}
Proposition~\ref{prop1} and Proposition~\ref{prop2} concern the deterministic
setting.
In a deterministic setting, if a draw is not optimal (i.e., winning), then the player
loses with probability~1 (hence the notion of $\delta$-optimality, for $\delta>0$
is not relevant).
We now consider the probabilistic setting and
 show that there are combinations of $\delta$ and $\eps$ such that a certain $\delta$-optimal 
draw gives a better $\eps$-guarantee than any optimal draw.
In other words, near optimal draws can be more robust than all optimal draws.

\begin{proposition}\label{prop3}
There exists a tournament with comparison matrix $P$ and a distinguished player $i$ and there exist $\delta,\eps>0$ and a $\delta$-optimal draw $\sigma'$ such that for all optimal draws $\sigma$ we have $\wpe(i,P,\sigma)<\wpe(i,P,\sigma')$.
\end{proposition}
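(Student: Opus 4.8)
The plan is to bootstrap from the maximally fragile hard tournament $\mathcal{H}_n$ of Proposition~\ref{prop1}: I would attach a gadget of ``external'' players that forces \emph{every} optimal draw to route the distinguished player (player~$1$) through that fragile $\mathcal{H}_n$, while leaving room for a near-optimal draw that is very robust. Concretely, fix a small $\gamma>0$, take $n=3$ (so the tournament has $N=2^{n+1}=16$ players), start from $\mathcal{H}_n$ on the $2^n$ players $\{1,\dots,2^n\}$ with its comparison matrix, and add $2^n$ externals $E=\{2^n+1,\dots,2^{n+1}\}$ with these rules: every external beats every player of $\{2,\dots,2^n\}$ with probability~$1$, every external loses to player~$1$ with probability exactly $1-\gamma$, and the externals form a fixed transitive tournament among themselves; call the resulting comparison matrix $P$. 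The two relevant draws are $\sigma$, which places $\{1,\dots,2^n\}$ in one half as the (unique) winning draw of $\mathcal{H}_n$ and $E$ in the other half, and $\sigma'$, which places player~$1$ together with $\{2^n+2,\dots,2^{n+1}\}$ in one half and $\{2,\dots,2^n\}$ together with external $2^n+1$ in the other.

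The heart of the argument is to show that $\mwp(1,P)=1-\gamma$ and that \emph{every} optimal draw routes player~$1$ through the canonical copy of $\mathcal{H}_n$ (with $E$ opposite). First, along any draw player~$1$ meets $n+1$ opponents, one per round; she beats exactly the $n$ ``safe'' players $A=\{2^k+1:0\le k<n\}$ of $\mathcal{H}_n$ with probability~$1$, beats each external with probability $1-\gamma$, and loses to everyone else with probability~$1$, so $\wp(1,P,\cdot)\le 1-\gamma$, with equality only if, with probability~$1$, player~$1$'s opponents are exactly the $n$ members of $A$ and one external. Second, since an external beats every $\mathcal{H}_n$-player other than player~$1$, any bracket of player~$1$ whose winner lies in $A$ is external-free; hence all $2^n$ externals lie in the single bracket whose winner is an external, which therefore has size $\ge 2^n$ and so is the final bracket, consisting of exactly $E$. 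The rest of player~$1$'s brackets then contain precisely $\{2,\dots,2^n\}$, so player~$1$ together with them fills a balanced sub-tree of size $2^n$ carrying the submatrix of $\mathcal{H}_n$, which player~$1$ wins with probability~$1$; by the uniqueness half of Proposition~\ref{prop1} this sub-tree must carry the canonical winning draw of $\mathcal{H}_n$. In particular $\sigma$ is optimal and $\wp(1,P,\sigma)=1-\gamma$.

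With this in hand the rest is routine perturbation bookkeeping. For any optimal $\sigma$, the $\eps$-perturbation that turns every $(0,1)$-match inside $\{1,\dots,2^n\}$ into an $(\eps,1-\eps)$-match and lowers each $P_{1e}$ to $1-\gamma-\eps$ drives player~$1$'s probability of clearing the $\mathcal{H}_n$-sub-tree down to the polynomial $p_n=1-(2^n-1)\eps+\eps^2 Q(\eps)$ computed in the proof of Proposition~\ref{prop1}, after which she beats the surviving external with probability $1-\gamma-\eps$, so $\wpe(1,P,\sigma)\le p_n(1-\gamma-\eps)=(1-\gamma)-2^n\eps+O(\eps^2)$. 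On the other hand, for \emph{every} $P''\in\p(P,\eps)$ along $\sigma'$ player~$1$ beats each of her $n$ external opponents with probability $\ge 1-\gamma-\eps$, external $2^n+1$ wins the other half (beating $n$ players of $\{2,\dots,2^n\}$) with probability $\ge(1-\eps)^n$, and then player~$1$ beats it with probability $\ge 1-\gamma-\eps$; as these match groups are disjoint, $\wpe(1,P,\sigma')\ge(1-\gamma-\eps)^{n+1}(1-\eps)^n=1-(n+1)\gamma-(2n+1)\eps+O(\eps^2)$. For $n=3$ these read $\wpe(1,P,\sigma)\le(1-\gamma)-8\eps+O(\eps^2)$ and $\wpe(1,P,\sigma')\ge 1-4\gamma-7\eps+O(\eps^2)$, so taking $\gamma=\eps/4$ and then $\eps$ small enough that the $O(\eps^2)$ terms are dominated gives $\wpe(1,P,\sigma)<\wpe(1,P,\sigma')$ for every optimal $\sigma$; and $\sigma'$ is $\delta$-optimal with $\delta=(1-\gamma)-(1-\gamma)^{n+1}>0$, which is exactly the statement.

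The step I expect to be the real obstacle is the structural claim of the second paragraph --- that the maximum winning probability is attained \emph{only} by routing player~$1$ through the canonical copy of $\mathcal{H}_n$. It marries two ingredients that need some care: the domination fact that an external beats every $\mathcal{H}_n$-player but player~$1$ (which is what corrals all externals into one bracket and forces the ``clean'' $\mathcal{H}_n$ sub-tree), together with the uniqueness of the winning draw of $\mathcal{H}_n$ from Proposition~\ref{prop1}. Everything after that is a first-order estimate in $\eps$ of the kind already used for Propositions~\ref{prop1} and~\ref{prop2}.
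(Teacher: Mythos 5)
Your proof is correct, but it takes a genuinely different route from the paper. The paper's construction is inherently probabilistic: it partitions the opponents into small, medium and big players with $P_{mb}=0.5-\frac12\eps$, shows via the mixed-draws lemma (\cref{lemm1}) that every optimal draw must pair mediums with bigs in the half opposite player~1, and then observes that this arrangement is fragile precisely because an $\eps$-perturbation can flip the bias of the near-fair medium-vs-big matches; the robust suboptimal draw instead clusters the big players so that one of them is guaranteed to survive deep into that half. Your construction instead wraps the deterministic hard tournament $\mathcal{H}_n$ of \cref{prop1} in a layer of ``external'' players and makes only player~1's matches against externals probabilistic: optimality forces player~1 through the canonical (and maximally fragile, with $2^n-1$ crucial matches) copy of $\mathcal{H}_n$, while the $\delta$-optimal draw trades a factor $(1-\gamma)^n$ of winning probability for a route through externals whose drop is only $O(n\eps)$. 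I checked the structural step you flagged --- since an external beats every player of $\{2,\dots,2^n\}$ with probability~1, any bracket of player~1 containing an external is won by an external, so optimality (which requires all but one opponent to lie in the $n$-element set $A$ of players that player~1 beats surely) corrals all $2^n$ externals into the size-$2^n$ final bracket and leaves $\{1,\dots,2^n\}$ to fill the other half, where the uniqueness part of \cref{prop1} applies --- and it is sound, as is the first-order bookkeeping with $\gamma=\eps/4$. What each approach buys: the paper's example is self-contained at the probabilistic level and exhibits fragility arising from uncertainty about the \emph{direction} of near-fair matches (and it reuses \cref{lemm1}, which the paper needs anyway), whereas yours is more economical given \cref{prop1} --- it needs no new combinatorial lemma, and it makes explicit that the deterministic ``many crucial matches'' phenomenon of \cref{prop1,prop2} already suffices to separate robustness of optimal and near-optimal draws in the probabilistic setting.
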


For proving Proposition~\ref{prop3} we use an auxiliary lemma.

Let a {\em big vs. small tournament} with players $[2^n]$ have the comparison matrix $P$ such that each of the players in $\{2^{n-1}+1,\dots,2^n\}$ (``big'' players) beats each of the players in $[2^{n-1}]$ (``small'' players) with the same probability $p\in(0.5,1)$.

\begin{lemma}[Mixed draws]
\label{lemm1} 
The draws that maximize the probability of a big player winning the big vs. small tournament (the \textit{big-optimal} draws) are precisely those that pair up big players and small players in the first round (so-called \textit{mixed} draws).
Moreover, for $p=0.5+\eps$, the probability $b(P)$ of a big player winning in such a draw equals $b(P)=0.5 + \frac12(n+1)\eps +\eps^2\cdot Q(\eps)$ for some polynomial $Q(\eps)$.
\end{lemma}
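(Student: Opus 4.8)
The plan is to prove the two claims of Lemma~\ref{lemm1} separately. First I would establish that the big-optimal draws are exactly the mixed draws. The key observation is that since all big--small matches have the same probability $p$ and big--big and small--small matches are never played against ``the wrong type'' until types are exhausted, the only thing that matters for a big player's winning chance is \emph{how many rounds} a big player can be guaranteed to play only small opponents. In a mixed draw, every first-round match is big vs.\ small, so after round~1 exactly $2^{n-1}$ big players remain (one from each pair) and $2^{n-1}$ small players remain; but then we are in a smaller instance of the same structure, so inductively every round is again big vs.\ small until the final. I would argue that any non-mixed draw pairs two big players (equivalently two small players) somewhere in round~1, which forces a big player to ``waste'' a guaranteed win against another big player and strictly lowers the probability; formally, I would compare the recursive winning-probability expression round by round and show monotonicity in $p>1/2$. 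This gives that mixed draws are precisely the maximizers.

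Second I would compute $b(P)$ for a mixed draw when $p = 0.5+\eps$. In a mixed draw, tracking a single big player, she faces a small player in round~1 (winning with probability $p$), and then in each subsequent round she faces the winner of a ``parallel'' bracket which, by the structure, is again a big player with some probability $q_k$ and a small player otherwise. Let $q_k$ be the probability that the player emerging from a $k$-round mixed subtournament is big. Then $q_1 = p$, and a big player wins her $(k{+}1)$-round subtournament with probability
\begin{align*}
r_{k+1} = r_k\bigl[q_k\cdot p + (1-q_k)\cdot(1-(1-p))\bigr],
\end{align*}
wait --- more carefully: given she has won her first $k$ rounds, her opponent is big with probability $q_k$ (she wins with probability $1-p$) and small with probability $1-q_k$ (she wins with probability $p$). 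So $r_{k+1} = r_k\bigl[q_k(1-p) + (1-q_k)p\bigr]$, with $r_1 = p$, while by symmetry $q_{k+1}$ satisfies its own recursion ($q_{k+1}=q_k^2(1-p)+\ldots$, the probability the bracket winner is big). I would then substitute $p = 1/2+\eps$, expand each recursion to first order in $\eps$ (discarding $\eps^2$ and higher into a polynomial $Q$), and verify by induction on $k$ that $r_k = \tfrac12 + \tfrac12 k\eps + \eps^2 Q_k(\eps)$ and $q_k = \tfrac12 + c_k\eps + \dots$; reading off $k = n$ (the big player must win $n$ rounds) gives $b(P) = r_n \cdot(\text{one more factor})$ --- actually the final is the $n$-th round, so $b(P) = \tfrac12 + \tfrac12(n+1)\eps + \eps^2 Q(\eps)$ as claimed once the bracket-size bookkeeping (an $n$-round tournament has the big player playing $n$ matches, and the base case contributes the ``$+1$'') is done correctly.

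The main obstacle I expect is the second part: getting the linear coefficient exactly $\tfrac12(n+1)$ rather than $\tfrac12 n$ or $\tfrac12(n+2)$. This requires being careful about (i) the coupled recursion between $r_k$ (a big player wins her subtournament) and $q_k$ (the subtournament winner is big), since at $\eps=0$ both are $1/2$ and the $\eps^1$ terms feed into each other, and (ii) the indexing of how many rounds correspond to an $n$-round tournament. The first claim (mixed draws are optimal) is comparatively routine: the only subtlety is ruling out non-mixed draws cleanly, which I would handle by noting any such draw has a round-1 big--big match, whose loser is a big player eliminated by another big player, and then showing that ``swapping'' to make that pair mixed does not decrease and in fact increases the target big player's chance, so no non-mixed draw is optimal. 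A secondary nuisance is that the lemma says ``a big player'' winning --- I would note that by the symmetry of the construction all big players are interchangeable in a mixed draw, so $b(P)$ is well defined, and similarly $\wp$ of the \emph{set} of big players is $2^{n-1} b(P)$ if needed later.
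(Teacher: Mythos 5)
There are genuine gaps in both halves of your argument. For the optimality of mixed draws, your structural claim that after round~1 of a mixed draw ``we are in a smaller instance of the same structure'' is false: each round-1 winner is big only with probability $p$, so from round~2 onward the types of the survivors are random and big--big and small--small matches do occur; there is no deterministic ``every round is big vs.\ small'' structure to induct on. More importantly, the swap argument you propose for ruling out non-mixed draws is not routine. Writing $\ell$ and $r$ for the probabilities that the two subtrees below a node are won by a big player, the probability that a big player emerges from that node is $p(\ell+r)+(1-2p)\ell r$; since $1-2p<0$, for \emph{fixed} $\ell+r$ this expression actually \emph{increases} as $(\ell,r)$ becomes more unbalanced. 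So moving a big player from a big--big pair into a distant small--small pair decreases $\ell$ and increases $r$, and one cannot conclude improvement without also controlling how the sum $\ell+r$ changes---this is exactly why the paper proves a dedicated rearrangement inequality (\cref{reorder}) and an inductive majorization statement (\cref{core}, including the claim that the optimal arrangement of $k$ ones entrywise majorizes the optimal arrangements with fewer ones) before deducing that the big players must split evenly between the two halves at every level. Your sketch skips precisely this content.

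For the second half, you are computing the wrong quantity. The number $b(P)$ is the probability that the \emph{winner is a big player} (at $\eps=0$ it equals $0.5$ by symmetry), not the probability that one fixed big player wins. Your recursion $r_{k+1}=r_k\bigl[q_k(1-p)+(1-q_k)p\bigr]$ with $r_1=p$ gives $r_k=2^{-k}+O(\eps)$, so $r_n$ cannot equal $0.5+\frac12(n+1)\eps+\cdots$, and no amount of ``bracket-size bookkeeping'' repairs the constant term. The quantity you actually need is your $q_k$; its correct recursion is $q_{k+1}=q_k^2\cdot 1+2q_k(1-q_k)p+(1-q_k)^2\cdot 0=q_k+2q_k(1-q_k)\eps$ (your sketched $q_{k+1}=q_k^2(1-p)+\cdots$ is also off: two big finalists guarantee a big winner), and with $q_1=0.5+\eps$ a one-line induction gives $q_n=0.5+\frac12(n+1)\eps+\eps^2Q(\eps)$. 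This is the paper's argument (its $e_k$ is your $q_k$); the coupled quantity $r_k$ is not needed at all.
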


For proving \cref{lemm1} we will need two more lemmas.

\begin{lemma}
\label{reorder} Let $C(P,S)$ be a complete $(n+2)$-round tournament. Denote the four $n$-round subdraws it consists of by $A$, $B$, $C$, $D$, i.e. $S=(A,B,C,D)$. Suppose that $a\geq b\geq c\geq d$ and either $b>c$ or both $a>b$ and $c>d$. Let $S'=(A,D,B,C)$. Then $b(P,S')>b(P,S)$.
\end{lemma}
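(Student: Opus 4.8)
The plan is to collapse the whole tournament to a scalar merge operation. In a big vs.\ small tournament every big--small match is won by the big player with the same probability $p$, so the only information about an $n$-round subdraw that matters is the probability that it is won by a big player; write $a=b(P,A)$, $b=b(P,B)$, $c=b(P,C)$, $d=b(P,D)$, and assume (as always here) that $a\ge b\ge c\ge d$. Let $f(x,y)$ be the probability that a big player emerges from one match between the winner of a subtournament won by a big player with probability $x$ and the winner of an independent subtournament won by a big player with probability $y$. Splitting over the three possibilities for the pair of sub-winners --- big/big (a big player wins with probability $1$), big/small or small/big (with probability $p$), small/small (with probability $0$) --- and using independence gives
$$f(x,y)=xy+p\,(x+y-2xy)=p(x+y)-(2p-1)\,xy,$$
so that $b(P,S)=f\big(f(a,b),f(c,d)\big)$ and $b(P,S')=f\big(f(a,d),f(b,c)\big)$.

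The second step is to expand the difference. Since $f$ is linear-plus-bilinear, $b(P,S')-b(P,S)$ is a polynomial in $a,b,c,d$, and the point is that it factors cleanly. A direct computation gives
$$f(a,d)+f(b,c)-f(a,b)-f(c,d)=(2p-1)(a-c)(b-d),$$
and, after the degree-$3$ terms cancel identically,
$$f(a,d)\,f(b,c)-f(a,b)\,f(c,d)=p^2\,(a-c)(b-d).$$
Feeding $u'=f(a,d),\ v'=f(b,c),\ u=f(a,b),\ v=f(c,d)$ into $f(u,v)=p(u+v)-(2p-1)uv$ and subtracting, both contributions are proportional to $(a-c)(b-d)$, and one obtains
$$b(P,S')-b(P,S)=(2p-1)\,p\,(1-p)\,(a-c)(b-d).$$

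The last step is to read off the sign. Since $p\in(0.5,1)$, the factor $(2p-1)\,p\,(1-p)$ is strictly positive, so it suffices to check that each hypothesis forces $a>c$ and $b>d$: if $b>c$ then $a\ge b>c$ and $b>c\ge d$; if instead $a>b$ and $c>d$ then $a>b\ge c$ and $b\ge c>d$; either way $(a-c)(b-d)>0$, hence $b(P,S')>b(P,S)$. There is no deep obstacle here; the two things to get right are the algebraic expansion in the middle step --- in particular checking that the cubic terms in $f(a,d)f(b,c)-f(a,b)f(c,d)$ really do vanish, which is exactly what makes the final expression a single product --- and the justification that the ``big/small'' label is a sufficient statistic for a subdraw (so that $f$ is well defined), which is where one uses that all big--small matches share the probability $p$.
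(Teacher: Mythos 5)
Your proof is correct and follows essentially the same route as the paper's: both express $b(P,S)$ and $b(P,S')$ via the merge formula $f(x,y)=p(x+y)-(2p-1)xy$ and reduce the difference to $p(2p-1)(1-p)(a-c)(b-d)>0$. You simply carry out explicitly the "straightforward" computation the paper leaves to the reader, including the cancellation of the cubic terms and the case check that the hypotheses force $a>c$ and $b>d$.
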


\begin{proof} This is straightforward. After expressing both $b(P,S')$ and $b(P,S)$ in terms of $a$, $b$, $c$, $d$, cancelling equal terms and dividing by $p(2p-1)$, the inequality $b(P,S')>b(P,S)$ is equivalent to $(1-p)(a-c)(b-d)>0$.
\end{proof}

\begin{lemma}
\label{core}
Given a big-optimal draw $S$ of $k$ ones and $2^n-k$ zeros, the following is true:
\begin{itemize}
\item The number of 1's in the left half and the right half differs by at most 1.
\item There is a unique way to arrange the $k$ 1's in an optimal draw (up to isomorphism).
\item This draw majorizes (entrywise) the optimal draws with fewer 1's.
\end{itemize}
\end{lemma}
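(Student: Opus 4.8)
\smallskip\noindent{\em Proof plan.}
The plan is to prove the three items simultaneously by induction on $n$, after strengthening the statement to also carry along the auxiliary fact that the sequence of optimal values $\beta^{(n)}_0<\beta^{(n)}_1<\dots<\beta^{(n)}_{2^n}$ is \emph{strictly} increasing, where $\beta^{(n)}_k$ denotes the largest probability that a big player wins, taken over all draws on $2^n$ leaves with exactly $k$ ones. This strictness is precisely what will let us invoke \cref{reorder}. The base cases $n\le 1$ are immediate. For the inductive step, $n\ge 2$, I split the binary tree into its two halves, each a tournament on $2^{n-1}$ leaves; since all big players (and all small players) are interchangeable, each half is itself an instance of the same problem.

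I would first record the one-round combining identity: if the two halves of a draw produce a big winner with probabilities $x$ and $y$ respectively, then the whole draw does so with probability $B(x,y)=p(x+y)-(2p-1)xy$, and since $p\in(0.5,1)$ we have $\partial B/\partial x=p-(2p-1)y\ge 1-p>0$, so $B$ is strictly increasing in each argument on $[0,1]^2$. Consequently, in any optimal draw with $k$ ones each half must itself be optimal for its own number of ones (otherwise replace a suboptimal half by the unique optimal one of the same count and strictly improve $B$); by the induction hypothesis each half is therefore the draw $S^{(n-1)}_{l}$, resp.\ $S^{(n-1)}_{r}$, with $l+r=k$, unique up to isomorphism, and in particular each half is split in a balanced way into its two quarter-draws.

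The core step is to prove $|l-r|\le 1$ for an optimal draw. Suppose $l\ge r+2$, so that the four quarter-draws $A,B,C,D$ (with $\{A,B\}$ the left half) carry $\lceil l/2\rceil\ge\lfloor l/2\rfloor\ge\lceil r/2\rceil\ge\lfloor r/2\rfloor$ ones, the middle inequality using $l\ge r+2$; hence their big-win probabilities $a\ge b\ge c\ge d$ are ordered the same way and the current draw pairs $\{a,b\}$ against $\{c,d\}$. I would then verify that the hypothesis of \cref{reorder} holds in all cases: either $b>c$ --- which, by strict monotonicity of $\beta^{(n-2)}$, happens exactly when $\lfloor l/2\rfloor>\lceil r/2\rceil$, i.e.\ in every case except $l=r+2$ with $r$ odd --- or else, in that one remaining case, $r$ and hence $l$ is odd, so that $a>b$ and $c>d$. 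In both situations \cref{reorder} produces a strictly better draw $S'=(A,D,B,C)$, contradicting optimality; hence $|l-r|\le 1$, which is the first item. Together with the uniqueness part of the induction hypothesis this forces the optimal draw to be the ``balanced recursive fill'' (at every node put $\lceil\cdot/2\rceil$ ones in one half and $\lfloor\cdot/2\rfloor$ in the other), giving the second item. For the third, fix once and for all the representatives that place $\lceil\cdot/2\rceil$ ones in the left half at every node; passing from $k$ to $k+1$ raises exactly one of $\lceil k/2\rceil,\lfloor k/2\rfloor$ by one and leaves the other unchanged, so by the induction hypothesis at level $n-1$ the support of $S^{(n)}_k$ is contained in that of $S^{(n)}_{k+1}$, and iterating yields entrywise domination by every draw with more ones. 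Finally I close the strengthened induction: $\beta^{(n)}_k=B\bigl(\beta^{(n-1)}_{\lceil k/2\rceil},\beta^{(n-1)}_{\lfloor k/2\rfloor}\bigr)$, and going from $k$ to $k+1$ strictly increases one of the two arguments of $B$, so $\beta^{(n)}$ is strictly increasing.

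I expect the main obstacle to be twofold. The first is choosing the induction hypothesis correctly --- in particular, recognizing that strict monotonicity of the optimal values must be carried along, since without it one cannot read equalities among $a,b,c,d$ back into equalities among the one-counts and \cref{reorder} loses its grip. The second is the case analysis in the core step: one must check that $l\ge r+2$ is incompatible with \emph{every} equality pattern among $a,b,c,d$ that would make \cref{reorder} inapplicable, and the bookkeeping with floors and ceilings is where slips are easy. A more direct route --- showing that $B(\beta^{(n-1)}_l,\beta^{(n-1)}_{k-l})$ is unimodal in $l$ --- appears to need convexity-type estimates on the sequence $\beta^{(n-1)}$, which is why I would instead push the argument down to the quarter-draws and appeal to \cref{reorder}.
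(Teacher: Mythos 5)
Your proof is correct and follows essentially the same route as the paper's: induction on $n$, splitting the draw into four quarter-subdraws, applying the induction hypothesis within each half, and invoking \cref{reorder} to exclude an imbalance of two or more ones between the halves. The only real difference is presentational --- you carry the strict monotonicity of the optimal values $\beta^{(n)}_k$ explicitly through the induction, whereas the paper leaves this implicit (it follows from its earlier observation that strict entrywise majorization strictly increases the big-win probability, combined with the third item).
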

\begin{proof} By induction on $n$. The statements are clearly true for $n=0$ and $n=1$. For the induction step, split the draw into four subdraws of the same size. By induction and without loss of generality, the numbers of ones $s$, $t$, $u$, $v$ in these subdraws satisfy $s\in\{t,t+1\}$, $u\in\{v,v+1\}$, $s\geq u$. If $s=t+1$ then casework and \cref{reorder} imply that either the numbers are $(n,n-1,n,n)$ or $(n,n-1,n,n-1)$ or $(n,n-1,n-1,n-1)$. If $s=t$ then \cref{reorder} implies that either the numbers are $(n,n,n,n)$ or $(n,n,n,n-1)$. Hence the first statement follows. For the second statement, note that the first and fifth option are isomorphic and the last four options exactly cover four possible sums $4n-3$ to $4n$. The third statement follows from
\begin{align*}
(n,n,n,n)&\succ (n,n-1,n,n)\succ (n,n-1,n,n-1)\succ \\
&\succ(n,n-1,n-1,n-1)\\
&\succ (n-1,n-1,n-1,n-1)
\end{align*}
and induction (here $\succ$ denotes majorization entrywise).
\end{proof}

\begin{proof}(of Lemma~\ref{lemm1})
Given any draw $S=(L,R)$ with left half $L$ and right half $R$, the chance of a big player winning $S$ is 
$$s= \ell r + p\ell(1-r) + p r(1-\ell) = p(\ell+r) + (1-2p)\ell r.
$$
Note that the right-hand side is linear and symmetric in both $\ell$ and $r$. We claim that it is also increasing in each of them. Indeed, rewriting it as $\ell(p+r-2pr)+pr$ it suffices to check that $p+r>pr+rp=2pr$.

Mark big players by 1 and small ones by 0. Note that if a subdraw $S$ is (strictly) majorized by a subdraw $S'$ (entrywise) then $b(S')> b(S)$.

Now we finish the proof of the first part of \cref{lemm1} easily.
Have a look at the big-optimal draw. The total number of ones is even so by Lemma~\ref{core} they have to divide evenly between left and right half. By induction we are done.

For the second part, denote by $e_k$ the probability of an even player winning in any $k$-round optimal tournament. Then $e_1=0.5+\eps$ and
\begin{align*}
e_{k+1}&=2e_k(1-e_k)(0.5+\eps) + e_k^2=\\
      &=e_k + 2e_k(1-e_k)\eps.
\end{align*}
The statement now follows by straightforward induction.
\end{proof}

\begin{proof} {\em (of Proposition~\ref{prop3})} Consider an $(n+1)$-round tournament with players $[2^{n+1}]$. Call players $\{2,3,\dots,2^n\}$ \textit{small}, players $2^n+1,\dots,2^n+2^{n-1}$ \textit{medium} and players $2^n+2^{n-1}+1,\dots,2^{n+1}$ \textit{big}. For $s$ small, $m$ medium, and $b$ big, let the comparison matrix $P$ satisfy:
\begin{itemize}
\item $P_{1s}=1$, $P_{1m}=0.4$, $P_{1b}=0.6$,
\item $P_{sm}=P_{sb}=0$,
\item $P_{mb}=0.5-\frac12\eps$.
\end{itemize}
We claim that:
\begin{enumerate}
\item The draws optimal for player 1 are those that have all the small players in one half and the medium players paired up with the big players in the other half.
\item For any optimal draw $\sigma$, the draw $\sigma'=(1,\dots,2^n)$ satisfies
$ \wp_{\eps}(1,P,\sigma)<\wp_{\eps}(1,P,\sigma')$
for all sufficiently small $\eps>0$.
\end{enumerate}
Indeed:

1. Let $\sigma$ be a draw of the desired form. By Lemma~\ref{lemm1}, 
we have 
\begin{align*}
\wp(1,P,\sigma)= & \left[0.5+\frac14(n+1)\eps +\eps^2Q(\eps)\right]\cdot 0.6 + \\
 & \left[0.5-\frac14(n+1)\eps -\eps^2Q(\eps)\right]\cdot 0.4 \\
 &= 0.5+\frac1{20}(n+1)\eps + \eps^2R(\eps).
\end{align*}
for some polynomials $Q(\eps), R(\eps)$.

Since there are $2^n-1<2^n$ small players, player 1 has to face a medium or big player in the final (provided he made it that far). 
Now let $\tau$ be a draw optimal for player 1 and suppose it contains a medium or big player in the left half. Then the player 1 has to face at least one more such player before the final. Hence his winning chance is at most $0.6^2<\wp(1,P,\sigma)$ for all sufficiently small $\eps>0$, and $\tau$ is not optimal. All the small players are therefore in the left half. For the right half, we want to maximize the probability of a big player winning it. By Lemma~\ref{lemm1}, this is accomplished by the mixed draws.

2. Let $\sigma$ be optimal. Note that since $\sigma$ and $\sigma'$ only differ in the right subtree, the inequality $$\wp_{\eps}(1,P,\sigma)<\wp_{\eps}(1,P,\sigma')$$
holds if and only if the minimum possible probability $b_\eps(\sigma)$ of a big player winning the right half of $C(P^*,\sigma)$, taken over all $P^*\in\mathcal{P}(P,\eps)$, is less than the minimum possible probability $b_\eps(\sigma')$ of a big player winning the right half of $C(P',\sigma')$, taken over all $P'\in\mathcal{P}(P,\eps)$.

For $\sigma$, consider the $\eps$-perturbation that replaces each $(0.5-\frac12\eps,0.5+\frac12\eps)$ match by a $(0.5+\frac12\eps,0.5-\frac12\eps)$ match. Then by Lemma~\ref{lemm1} we have $$b_{\eps}(\sigma)=0.5 -\frac14(n+1)\eps - \eps^2Q(\eps)$$
for some polynomial $Q(\eps)$.

On the other hand, for all $P'\in\mathcal{P}(P,\eps)$ a big player certainly wins the right half of the right subtree of $C(P',\sigma')$ (since that subtree only consists of big players), hence
$$b_\eps(\sigma')\geq \left(0.5+\frac12\eps\right) - \eps = 0.5 -\frac12\eps.$$
For $n\geq 2$ and all sufficiently small $\eps>0$ we then get the desired $b_\eps(\sigma)<b_\eps(\sigma')$ and thus in turn $\wp_{\eps}(1,P,\sigma)<\wp_{\eps}(1,P,\sigma')$. \hfill$\qedhere$
\end{proof}

We again illustrate this with a numerical example.
\begin{example} Consider a 6-round tournament with players $[64]$. Call players $\{2,3,\dots,32\}$ \textit{small}, players $33,\dots,48$ \textit{medium} and players $49,\dots,64$ \textit{big}. For $s$ small, $m$ medium, and $b$ big, let the comparison matrix $P$ satisfy:
\begin{itemize}
\item $P_{1s}=1$, $P_{1m}=0.4$, $P_{1b}=0.6$,
\item $P_{sm}=P_{sb}=0$,
\item $P_{mb}=0.49$.
\end{itemize}

Consider $\sigma=(1,2,\dots,32,33,49,34,50,\dots,48,64)$ and $\sigma'=(1,\dots,64)$. Then it is straightforward to compute $\mwp(1,P)=\wp(1,P,\sigma)> 0.506$ and $\wp(1,P,\sigma')=0.502<\mwp(1,P)$ so $\sigma'$ is not optimal for player 1. However,
$\wp_{0.02}(1,P,\sigma)<0.429 < 0.432< \wp_{0.02}(1,P,\sigma').$
Indeed, for $\sigma$, the $0.02$-perturbation $P^*$ of $P$ satisfying
\begin{itemize}
\item $P^*_{1s}=0.98$, $P^*_{1m}=0.38$, $P^*_{1b}=0.58$,
\item $P^*_{sm}=P_{sb}=0$,
\item $P^*_{mb}=0.51$
\end{itemize}
yields $\wp_{0.02}(1,P,\sigma)<\wp(1,P^*,\sigma)<0.429$.

On the other hand, consider the draw $\sigma'$ and take any $P'\in\mathcal{P}(P,0.02)$. Player 1 wins his first five matches with probability at least $0.98^5$. Then he beats every medium (resp., big) player with probability at least 0.38 (resp., 0.58) and the probability that a big player wins the right half is at least $0.51-0.02=0.49$. Overall,
$\wp(1,P',\sigma)\geq 0.98^5\cdot (0.49\cdot 0.58 + 0.51\cdot 0.38)>0.432.$
\end{example}

\subsection{Computational Questions}
We extend the TFP and PTFP problem with robustness and we focus on the question of approximating the $\eps$-worst drop $d_{\eps}(i^*,P,\sigma)$.

Denote by $\xi(P)=\min\{P_{ij}\vert P_{ij}\neq 0\}$ the smallest nonzero value in the comparison matrix (if $P$ is deterministic then $\xi(P)=1$).
It is straightforward to see that for $\eps\leq \xi$, the $\eps$-guaranteed winning probability $\wp_{\eps}(i^*,P,\sigma)$ is a polynomial in $\eps$ whose constant term is $\wp(i^*,P,\sigma)$. For such $\eps$, our approximation $\wh{d}_{\eps}(i^*,P,\sigma)$ of the drop $d_{\eps}(i^*,P,\sigma)$ is the linear term of this polynomial. 
Since higher order terms of $\eps$ are ignored, the number $\wh{d}_{\eps}$ 
is really only an approximation of $d_{\eps}$. However, for all $c<N$, if $\eps$ is sufficiently
small (smaller than $cN^{-2}$), then $\wh{d}_{\eps}$ is within $\pm c\eps$ of $d_{\eps}$, so the approximation is tight. 
Formally, we consider the following problems.

\begin{problem}[RTFP] The Robust Tournament Fixing Problem (RTFP):
Given a player set $[N]$, a deterministic comparison matrix $P$,
a distinguished player $i^*\in N$ and $c \in \en$,
does there exist a draw $\sigma$ for the player set $N$ such that
(a)~player $i^*$ is the winner of $C(P,\sigma)$; and
(b)~$\wh{d}_{\eps}(i^*,P,\sigma) \leq c\eps$, for all $\eps>0$ sufficiently small?
\end{problem}

\begin{problem}[RPTFP] The Robust Probabilistic Tournament Fixing Problem (RPTFP):
Given a player set $[N]$, a comparison matrix $P$, a distinguished player $i^*\in [N]$, target probability $q\in[0,1]$ and $s \in \er$, does there exist a draw $\sigma$ for the player set $[N]$ such that 
(a)~$\wp(i^*,P,\sigma) \geq q$; and 
(b)~$\wh{d}_{\eps}(i^*,P,\sigma) \leq s\eps$, for all $\eps>0$ sufficiently small?
\end{problem}

Note that the drop computation is an infimum over all $\eps$-perturbation matrices, 
and thus represents a universal quantification. 
Thus in contrast to the TFP and the PTFP problems which have only existential 
quantification over draws, the robustness problem has a quantifier alternation of 
existential and universal quantifiers.

\section{Algorithms}

The aim of this section is to prove the following theorem.

\begin{theorem}\label{thm1}
Given a comparison matrix $P$ (deterministic or probabilistic), sufficiently small $\eps>0$, distinguished player $i^*$ and a draw $\sigma$, the value $\wh{d}_{\eps}(i^*,P,\sigma)$ can be computed in polynomial time. 
\end{theorem}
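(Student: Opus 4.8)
The plan is to exploit the fact that, for $\eps \le \xi(P)$, the $\eps$-guaranteed winning probability $\wp_\eps(i^*,P,\sigma)$ is a polynomial in $\eps$, so $\wh d_\eps$ is just (minus) its linear coefficient; it therefore suffices to identify, for each match in the tournament tree, the first-order effect of perturbing that single match's outcome probability and to argue that these first-order effects simply add up. Concretely, I would first fix the draw $\sigma$ and look at the balanced binary tree $T([N],\sigma)$, with one match at each internal node. Recall from \cite{VAS09} that $\wp(i^*,P,\sigma)$ is computed by a bottom-up recursion: at each internal node one has, for every player $a$, the probability $q_a$ that $a$ reaches that node, and these are combined across the two children via the entries of $P$. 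The key observation is that $\wp(i^*,P,\sigma)$ is multilinear in the matrix entries that are actually used (each match entry $P_{ab}$ appears with degree at most one along any root-to-leaf computation), so the worst perturbation, to first order, is obtained by pushing each used entry independently to one of its two extremes $P_{ab}\pm\eps$, whichever decreases $\wp(i^*,P,\sigma)$; second-order interactions between two perturbed matches contribute only $O(\eps^2)$ and are absorbed into the polynomial remainder.

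Given this, the algorithm is: (1) run the $O(N^2)$ recursion of \cite{VAS09} once on $P$ itself to obtain, at every internal node $v$ and for every player $a$, the probability $r_a(v)$ that $a$ advances past $v$, and symmetrically the ``top-down'' quantities --- the probability, conditioned on a given player occupying node $v$, that $i^*$ eventually wins --- which can also be computed by a single sweep; (2) for each internal node $v$, where players from its left subtree meet players from its right subtree, compute the partial derivative of $\wp(i^*,P,\sigma)$ with respect to the match outcome at $v$, i.e.\ $\sum_{a,b} r_a(v_L) r_b(v_R)\cdot(\text{contribution of }a\text{ vs.\ }b\text{ to }i^*\text{'s win})$, which is a signed quantity telling us how much $\wp$ moves if we shift the ``$a$ beats $b$'' probabilities at $v$; (3) the first-order drop contributed by node $v$ is the absolute value of that derivative times $\eps$ --- we get to choose the sign of the perturbation adversarially per node. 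Summing over all $N-1$ internal nodes gives $\wh d_\eps(i^*,P,\sigma)/\eps$. The whole computation is a constant number of $O(N^2)$ sweeps plus $O(N)$ work per node, hence polynomial.

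The main obstacle, and the step I would spend the most care on, is the justification that the adversary's optimal choices at different nodes are \emph{independent} to first order, i.e.\ that the infimum over $\p(P,\eps)$ of $\wp(i^*,P',\sigma)$ equals $\wp(i^*,P,\sigma)$ minus $\eps\sum_v |\partial_v \wp| + O(\eps^2)$, rather than some more subtle combination. This follows from multilinearity: writing $\wp(i^*,P',\sigma)$ as a multilinear polynomial in the perturbation variables $\{\Delta_{ab}\}$ with $|\Delta_{ab}|\le\eps$, the linear part is $\sum_v \partial_v\wp \cdot \Delta_v$ and is minimized by taking each $\Delta_v = \pm\eps$ with the sign opposite to $\partial_v\wp$, while every monomial of degree $\ge 2$ is bounded by $\eps^2$ times an integer coefficient, uniformly; one then checks that this extremal perturbation is itself a legal comparison matrix (it is, since each entry moves by at most $\xi(P)\ge\eps$ and stays in $[0,1]$, and $P'_{ab}+P'_{ba}=1$ is preserved by moving the pair symmetrically). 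A secondary point to handle cleanly is bookkeeping: a single match at node $v$ involves possibly many player pairs $(a,b)$ with $a$ in the left subtree and $b$ in the right, and perturbing ``the match at $v$'' really means perturbing all the relevant entries $P_{ab}$ simultaneously --- but since the adversary may perturb them independently, one actually takes, for each ordered pair $(a,b)$ realized at $v$, its own sign; I would phrase the per-node derivative as a sum over such pairs so that this is transparent. With these pieces in place the theorem follows.
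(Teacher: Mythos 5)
Your overall strategy is the same one the paper uses for the probabilistic half of the theorem: $\wp(i^*,P,\sigma)$ is (multi)linear in each entry $P_{ij}$, so by a first-order Taylor expansion the worst $\eps$-perturbation decouples across entries, each entry contributes the absolute value of its linear coefficient, and these coefficients are computable by running the recursion of \cite{VAS09} symbolically. Your forward/backward two-sweep computation of all the partial derivatives is a mild efficiency refinement over the paper's method (which simply re-runs the $O(N^2)$ recursion once per pair, for $O(N^4)$ total), but both are polynomial, so this is immaterial for the statement.

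There is, however, a genuine gap in your uniform treatment of the two cases, and it sits exactly where the paper splits into a separate deterministic argument. You assert that the sign-optimal extremal perturbation ``is itself a legal comparison matrix \dots since each entry moves by at most $\xi(P)\ge\eps$ and stays in $[0,1]$.'' That is false for entries already at $0$ or $1$: there the adversary can move in only one direction, so the contribution of such an entry to $\wh{d}_\eps$ is not $|\alpha_{ij}|\eps$ but the decrease achievable in the single feasible direction, which may be $0$ even when $\alpha_{ij}\neq 0$. A concrete failure: if $P_{ab}=1$ but player $b$ reaching the next round would \emph{help} $i^*$, then $\partial\wp/\partial P_{ab}<0$ in the relevant sense, your formula charges $|\alpha_{ab}|\eps$ to the drop, yet no legal perturbation of that entry hurts $i^*$ at all. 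The paper sidesteps this by assuming $P_{ij}\notin\{0,1\}$ in its probabilistic lemma and by handling the fully deterministic case separately via ``crucial matches'' (where the issue is harmless because $\wp=1$ is already maximal, so every partial derivative points in the feasible, harmful direction). Your argument as written covers neither the deterministic case nor mixed matrices with boundary entries correctly. The fix is easy --- clip each entry's contribution to $\max\bigl(0,\text{decrease in the feasible direction}\bigr)$, consistent with the paper's Lemma~\ref{perturbalot}, which allows $P'_{ij}\in\{0,1\}$ as an alternative to $|P'_{ij}-P_{ij}|=\eps$ --- but it needs to be stated, since the theorem explicitly claims both the deterministic and the probabilistic settings.
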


We start with a lemma stating that if we are allowed to perturb by $\eps$, perturbing by less is not worth it.

\begin{lemma}\label{perturbalot} For every $C(P,\sigma)$ with a distinguished player $i^*$ and any $\eps>0$ there exists an $\eps$-perturbation $P'$ of $P$ that gives the worst possible winning probability $\wp(i^*,P',\sigma)=\wpe(i^*,P,\sigma)$ for player $i^*$ and that for each $1\leq i\neq j\leq N$ satisfies that either $|P'_{ij}-P_{ij}|=\eps$ or $P'_{ij}\in\{0,1\}$. 
\end{lemma}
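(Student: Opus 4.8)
The plan is to exploit the fact that, for a fixed draw $\sigma$, the winning probability $\wp(i^*,\cdot,\sigma)$ is a multilinear function of the free entries of the comparison matrix, and that a multilinear function minimised over a box attains its minimum at a vertex; reading off such a vertex will give the desired $P'$.

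First I would record the structural observation that makes $\wp(i^*,\cdot,\sigma)$ multilinear: in the tree $T([N],\sigma)$ any two players $i\neq j$ can meet in at most one match, namely at their lowest common ancestor $v$, and only if both survive until then. Fix all entries of a perturbation $P'$ except $P'_{ij}$ (and the forced entry $P'_{ji}=1-P'_{ij}$), and write $v=\mathrm{lca}(i,j)$ with $i$ in the left subtree $T_L$ of $v$ and $j$ in the right subtree $T_R$ (without loss of generality). The probability $\alpha$ that $i$ wins $T_L$ and the probability $\beta$ that $j$ wins $T_R$ involve no $i$–$j$ match, so they do not depend on $P'_{ij}$; likewise the probability that any other player labels $v$, and the probability that $i^*$ wins the remainder of the tournament given the label of $v$, are independent of $P'_{ij}$. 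Unwinding the recursion of~\cite{VAS09} through the node $v$ therefore yields $\wp(i^*,P',\sigma)=A+B\cdot P'_{ij}$ with $A,B$ independent of $P'_{ij}$. Since this holds for every pair, $\wp(i^*,\cdot,\sigma)$, viewed as a function of the variables $\{P'_{ij}:i<j\}$, is affine in each variable, i.e.\ multilinear.

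Next I would identify $\p(P,\eps)$ with the box $B=\prod_{i<j}[\ell_{ij},u_{ij}]$, where $\ell_{ij}=\max\{0,P_{ij}-\eps\}$ and $u_{ij}=\min\{1,P_{ij}+\eps\}$ (a product box, since each $P'_{ij}$ with $i<j$ may be chosen freely in its interval and determines $P'_{ji}$). By compactness of $B$ and continuity of $\wp(i^*,\cdot,\sigma)$ the infimum $\wpe(i^*,P,\sigma)$ is attained at some $P^\star\in B$. Starting from $P^\star$, I would push one coordinate at a time to an optimal endpoint: restricting the multilinear function to the segment of $B$ obtained by varying only the $(i,j)$-coordinate gives an affine function of $P'_{ij}\in[\ell_{ij},u_{ij}]$, which is minimised at $\ell_{ij}$ or $u_{ij}$, and moving there keeps the value equal to $\wpe(i^*,P,\sigma)$ (it cannot go below a global minimum). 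After processing all pairs we reach a vertex $P'$ of $B$ with $\wp(i^*,P',\sigma)=\wpe(i^*,P,\sigma)$. At this vertex each $P'_{ij}$ ($i<j$) equals $\ell_{ij}\in\{P_{ij}-\eps,\,0\}$ or $u_{ij}\in\{P_{ij}+\eps,\,1\}$, so either $|P'_{ij}-P_{ij}|=\eps$ or $P'_{ij}\in\{0,1\}$; the same dichotomy transfers to the forced entry $P'_{ji}=1-P'_{ij}$, which proves the lemma.

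The only genuine obstacle is the multilinearity step — specifically, arguing cleanly that the two players involved in an entry $P_{ij}$ meet at most once and that conditioning on the label of their common ancestor decouples the dependence on $P'_{ij}$ from the rest of the tournament. Once this is in place, the remaining ``a multilinear function on a box is minimised at a vertex'' argument is routine.
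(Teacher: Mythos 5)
Your proof is correct and follows essentially the same route as the paper's: observe that $\wp(i^*,\cdot,\sigma)$ is multilinear in the entries $P_{ij}$, identify $\p(P,\eps)$ with a box, and conclude that the minimum is attained at a vertex, whose coordinates satisfy the stated dichotomy. You simply spell out two steps the paper leaves implicit (the lowest-common-ancestor argument for multilinearity and the coordinate-by-coordinate push to a vertex rather than merely "the boundary"), which only strengthens the argument.
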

\begin{proof} For a fixed draw $\sigma$, the winning chance $\wp(i^*,P,\sigma)$ can be expressed using variables $P_{ij}$, $1\leq i\neq j\leq N$. Since each combination of outcomes of the matches determines if $i^*$ won or not, this expression is a polynomial and is linear in each of the $P_{ij}$'s. If we view $\mathcal{P}(P,\eps)$ as a subset of $\er^{\binom{N-1}{2}}$ with $\ell_\infty$ metric, then the minimum of the function $\wp(i^*,P',\sigma)$ over the set $\mathcal{P}(P,\eps)$ is attained on the boundary. Hence the result follows.
\end{proof}

\subsection{Deterministic setting}
First we focus on the deterministic case. If a player doesn't win a tournament we say he \textit{lost}.
Consider a complete deterministic tournament $C(P,\sigma)$ with winner $i^*$ and let $\eps\in(0,1)$. Lemma~\ref{perturbalot} gives Corollary~\ref{coro_1}.

\begin{corollary}\label{coro_1} 
The value $\wpe(i^*,N,P,\sigma)$ is attained for a matrix $P'$ with some $(0,1)$ matches altered into $(\eps,1-\eps)$ matches and the remaining matches left intact.
\end{corollary}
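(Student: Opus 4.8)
The plan is to derive Corollary~\ref{coro_1} directly from Lemma~\ref{perturbalot}, exploiting that in a deterministic tournament every off-diagonal entry of $P$ lies in $\{0,1\}$ together with the standing hypothesis $\eps\in(0,1)$.

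First I would invoke Lemma~\ref{perturbalot} to fix an $\eps$-perturbation $P'$ of $P$ with $\wp(i^*,P',\sigma)=\wpe(i^*,P,\sigma)$ such that for every pair $i\neq j$ we have either $|P'_{ij}-P_{ij}|=\eps$ or $P'_{ij}\in\{0,1\}$. I would then fix an arbitrary such pair and, after possibly renaming $i$ and $j$, assume $P_{ij}=1$ and $P_{ji}=0$, the two degenerate ways a deterministic match can go.

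Next I would case on which alternative of Lemma~\ref{perturbalot} holds for this pair. If $|P'_{ij}-P_{ij}|=\eps$, then $P'_{ij}=1-\eps$ (since $1+\eps\notin[0,1]$), and the comparison-matrix constraint forces $P'_{ji}=1-P'_{ij}=\eps$; so the $(0,1)$ match has become an $(\eps,1-\eps)$ match. If instead $P'_{ij}\in\{0,1\}$, then $P'$ being an $\eps$-perturbation gives $|P'_{ij}-1|\leq\eps<1$, and the only element of $\{0,1\}$ at distance strictly less than $1$ from $1$ is $1$ itself; hence $P'_{ij}=1$, $P'_{ji}=0$, and the match is left intact. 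Applying this dichotomy to every pair shows that $P'$ arises from $P$ by turning some subset of the $(0,1)$ matches into $(\eps,1-\eps)$ matches and leaving the rest unchanged, which is exactly the claim.

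There is essentially no obstacle: the only two things to be careful about are (i) using the strict inequality $\eps<1$ to rule out that the ``$P'_{ij}\in\{0,1\}$'' branch of Lemma~\ref{perturbalot} corresponds to flipping a match outcome (a legitimate $\eps$-perturbation only when $\eps\geq 1$), and (ii) keeping the constraint $P'_{ij}+P'_{ji}=1$ in mind so that perturbing one direction of a match automatically perturbs the reverse direction consistently.
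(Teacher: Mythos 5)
Your proposal is correct and follows exactly the route the paper intends: the paper derives Corollary~\ref{coro_1} directly from Lemma~\ref{perturbalot} without spelling out the details, and your case analysis (using $\eps<1$ to pin down $P'_{ij}\in\{1-\eps,1\}$ for a deterministic entry $P_{ij}=1$, together with the constraint $P'_{ij}+P'_{ji}=1$) is precisely the elaboration that is left implicit. No gaps.
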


\noindent{\em Crucial matches.}
Denote by $S_\eps$ the set of $(0,1)$ matches such that altering them to $(\eps,1-\eps)$ matches gives $P'\in\mathcal{P}(P,\eps)$ satisfying $\wp(i^*,P',\sigma)=\wpe(i^*,P,\sigma)$.
A $(0,1)$ match is called \textit{crucial\/} if replacing it (and it only) by a $(1,0)$ match makes $i^*$ lose. Denote the set of crucial matches by $C$. The next lemma says that the crucial matches are the only matches that matter for the sake of determining $\wh{d}_\eps(i^*,P,\sigma)$.

\begin{lemma}\label{lemm_3} 
Suppose $C(P,\sigma)$ is won by player $i^*$. Replace some subset $S$ of $(0,1)$ matches by $(\eps,1-\eps)$ matches to get $P'\in\mathcal{P}(P,\eps)$. Suppose $c$ of those $|S|=s$ matches are crucial. Then
$\wp(i^*,P',\sigma)=1-c\cdot\eps + \eps^2Q(\eps)$,
for some polynomial $Q(\eps)$.
\end{lemma}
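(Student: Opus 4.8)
The plan is to compute $\wp(i^*, P', \sigma)$ as a polynomial in $\eps$ and read off its constant term and linear coefficient. Since $P$ is deterministic and $i^*$ wins $C(P, \sigma)$, we have $\wp(i^*, P, \sigma) = 1$, so the constant term (obtained by setting $\eps = 0$, i.e. $P' = P$) is $1$; this gives the "$1$" in the claimed expression. The whole content is therefore the coefficient of the linear term, which I claim equals $-c$.

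The key observation is the following: $\wp(i^*, P', \sigma)$ is a multilinear polynomial in the variables $\{\eps : (0,1)\text{-match in } S\}$ — or rather, since every edge in $S$ is perturbed by the \emph{same} amount $\eps$, it is a polynomial in the single variable $\eps$, but it arises by substituting $\eps$ into a function that is multilinear in the $|S| = s$ perturbed match-outcome probabilities. Write the perturbed matches as $m_1, \dots, m_s$ and let $x_k \in [0,1]$ denote the probability that player $i^*$'s "side" wins match $m_k$ (so in $P'$ each $x_k = 1 - \eps$, and in $P$ each $x_k = 1$). Then $f(x_1, \dots, x_s) := \wp(i^*, \cdot, \sigma)$ is multilinear, so by a first-order Taylor expansion around $(1, \dots, 1)$,
\[
\wp(i^*, P', \sigma) = f(1 - \eps, \dots, 1 - \eps) = f(1,\dots,1) - \eps \sum_{k=1}^{s} \frac{\partial f}{\partial x_k}(1,\dots,1) + \eps^2 Q(\eps)
\]
for some polynomial $Q$, since all error terms in the multilinear expansion involve products of at least two of the $\eps$'s. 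Now $f(1,\dots,1) = 1$. The next step is to identify each partial derivative: because $f$ is multilinear in $x_k$, we have $\frac{\partial f}{\partial x_k}(1,\dots,1) = f|_{x_k = 1, \text{others} = 1} - f|_{x_k = 0, \text{others} = 1} = 1 - f|_{x_k = 0}$. But $f|_{x_k = 0}$ (all other perturbed matches restored to their original deterministic outcome, and match $m_k$ flipped to $(1,0)$) is exactly $1$ if $m_k$ is \emph{not} crucial and $0$ if $m_k$ \emph{is} crucial, by the definition of a crucial match — flipping one deterministic match either still lets $i^*$ win, or makes $i^*$ lose. Hence $\frac{\partial f}{\partial x_k}(1,\dots,1)$ equals $0$ if $m_k$ is not crucial and $1$ if $m_k$ is crucial. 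Summing over the $s$ matches in $S$, exactly $c$ of which are crucial, gives $\sum_k \frac{\partial f}{\partial x_k}(1,\dots,1) = c$, and therefore $\wp(i^*, P', \sigma) = 1 - c\eps + \eps^2 Q(\eps)$, as claimed.

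The step I expect to be the main obstacle — or at least the one needing care — is justifying cleanly that the linear term of the multilinear function is \emph{exactly} captured by the sum of these single-flip quantities, i.e. that no cross-terms contribute at order $\eps^1$. This is immediate from multilinearity once it is set up correctly: expanding $f(1-\eps,\dots,1-\eps)$ as a sum over subsets $A \subseteq S$ of a term proportional to $(-\eps)^{|A|}$ times a coefficient depending on the structure, the $|A| = 0$ term is $1$, the $|A| = 1$ terms give the linear contribution, and $|A| \geq 2$ terms all absorb into $\eps^2 Q(\eps)$. The only genuinely tournament-specific input is the equivalence "$\frac{\partial f}{\partial x_k}(1,\dots,1) = 1 \iff m_k$ crucial", which follows directly from the definition of crucial match; one should note that when we flip $m_k$ to $(1,0)$ while keeping the other matches in $S$ at their original deterministic values, we are back to a deterministic tournament differing from $C(P,\sigma)$ in exactly one match, so $i^*$ either wins (probability $1$) or loses (probability $0$).
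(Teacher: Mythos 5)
Your proof is correct, but it takes a different route from the paper's. The paper argues directly on events: it decomposes into three disjoint scenarios --- (A) no upsets, probability $(1-\eps)^s = 1 - s\eps + \eps^2Q_1(\eps)$; (B) exactly one crucial upset, probability $c\eps(1-\eps)^{s-1}$; (C) exactly one non-crucial upset, probability $(s-c)\eps(1-\eps)^{s-1}$ --- and then sandwiches $\wp(i^*,P',\sigma)$ between the lower bound $\Pr(A)+\Pr(C)$ and the upper bound $1-\Pr(B)$, both of which equal $1 - c\eps$ up to $\eps^2$ terms. You instead view $\wp$ as a multilinear function of the $s$ perturbed match probabilities, expand to first order around $(1,\dots,1)$, and identify each partial derivative $f|_{x_k=1} - f|_{x_k=0}$ with the indicator of $m_k$ being crucial. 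Both arguments are sound and hinge on the same underlying fact (only single upsets contribute at order $\eps$, and a single upset flips the outcome iff the match is crucial). What your approach buys is uniformity: it is essentially the specialization to the deterministic case of the argument the paper itself deploys later for the probabilistic setting (Lemma~\ref{taylor}, via the multivariate Taylor expansion of the multilinear polynomial), so the deterministic and probabilistic cases would follow from one template. What the paper's version buys is that it is entirely elementary and self-contained, needing no setup of partial derivatives or multilinear expansions. One small point of care in your write-up: the ``matches'' in $S$ are entries of the comparison matrix, and a perturbed pair $(i,j)$ whose players never actually meet in $C(P,\sigma)$ has zero partial derivative (and is non-crucial), so your identification still goes through, but it is worth saying explicitly rather than implicitly assuming every element of $S$ is a match that is played.
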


\begin{proof} We account for three scenarios: (A) All matches play out as before; (B) a single crucial match becomes an upset; (C) a single non-crucial match becomes an upset. For appropriate polynomials $Q_i(\eps)$, $i=1,2,3,4$, the chances of this happening are
\begin{align*}
&\textrm{(A)} \qquad (1-\eps)^s &&=1-s\cdot\eps +\eps^2Q_1(\eps)\\
&\textrm{(B)} \qquad c\eps(1-\eps)^{s-1} &&=c\eps+\eps^2Q_2(\eps)\\
&\textrm{(C)} \qquad (s-c)\eps(1-\eps)^{s-1} &&=(s-c)\eps +\eps^2Q_3(\eps)
\end{align*}
Since $i^*$ wins in (A) and (C) and loses in (B), we have
$$1-c\eps - \eps^2 Q_4(\eps)\leq \wpe(i^*,P',\sigma) \leq 1-c\eps - \eps^2Q_2(\eps)
$$
where $Q_4(\eps)=Q_1(\eps)+Q_3(\eps)$. The result follows.
\end{proof}

Lemma~\ref{lemm_3} yields the following corollary.
\begin{corollary}\label{coro_2} 
Let $C(P,\sigma)$ be a deterministic tournament with winner $i^*$ and $c$ crucial matches and let $\eps\in (0,1)$. Then $\wh{d}_\eps(i^*,P,\sigma) = c\eps$.
\end{corollary}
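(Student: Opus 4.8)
The plan is to combine Corollary~\ref{coro_1}, Lemma~\ref{lemm_3}, and the definition of $\wh{d}_\eps$ in a direct chain. First I would invoke Corollary~\ref{coro_1} to say that the infimum defining $\wpe(i^*,P,\sigma)$ is attained by a perturbation $P'$ that alters some subset $S$ of the $(0,1)$ matches into $(\eps,1-\eps)$ matches and leaves all other matches untouched. Write $s=|S|$ and let $c'$ be the number of those $s$ matches that are crucial; by definition $c'\le c$, where $c=|C|$ is the total number of crucial matches. Applying Lemma~\ref{lemm_3} to this $P'$ gives $\wp(i^*,P',\sigma)=1-c'\eps+\eps^2 Q(\eps)$, hence $\wpe(i^*,P,\sigma)=1-c'\eps+\eps^2 Q(\eps)$, and since $\wp(i^*,P,\sigma)=1$ (player $i^*$ wins $C(P,\sigma)$), the drop is $d_\eps = c'\eps - \eps^2 Q(\eps)$, whose linear term is $c'\eps$. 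So $\wh{d}_\eps(i^*,P,\sigma)=c'\eps$.

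It remains to argue $c'=c$, i.e.\ that the worst perturbation actually flips all crucial matches. The key observation is monotonicity: flipping a crucial match from $(0,1)$ to $(\eps,1-\eps)$ can only decrease (weakly) player $i^*$'s winning probability, because $\wp(i^*,P',\sigma)$ is linear in each entry $P'_{ij}$ (as noted in the proof of Lemma~\ref{perturbalot}), and for a crucial match the relevant coefficient has the sign that makes pushing the entry toward the ``upset'' direction hurt $i^*$. More carefully, I would argue that the perturbation $P''$ which flips \emph{all} $c$ crucial matches into $(\eps,1-\eps)$ matches satisfies $\wp(i^*,P'',\sigma)\le \wp(i^*,P',\sigma)$ for any $P'$ of the form in Corollary~\ref{coro_1}: adding more crucial flips only adds more ways for $i^*$ to lose and never removes any, so by Lemma~\ref{lemm_3} applied with $S=C$ we get $\wp(i^*,P'',\sigma)=1-c\eps+\eps^2 \tilde Q(\eps)$. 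Since $\wpe$ is the infimum, $\wpe(i^*,P,\sigma)\le 1-c\eps+\eps^2\tilde Q(\eps)$, so its linear term is at most $-c\eps$; combined with the bound $c'\le c$ from the other direction, the linear term is exactly $-c\eps$, giving $\wh{d}_\eps(i^*,P,\sigma)=c\eps$.

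The main obstacle is making the monotonicity argument fully rigorous without expanding the polynomial: one must be careful that ``flipping a crucial match hurts $i^*$'' holds even in the presence of other simultaneous flips, i.e.\ that crucialness is a robust property and not something that can be ``cancelled'' by another perturbation at first order. This is handled cleanly by Lemma~\ref{lemm_3} itself, which already isolates the first-order contribution as exactly $-c\eps$ where $c$ counts crucial matches among those flipped — so the argument reduces to the combinatorial fact that the optimal perturbation flips a superset of all crucial matches (flipping non-crucial ones is harmless at first order, flipping crucial ones is exactly what drives the drop), and hence $c'=c$ is without loss of generality. Once that is in place, the statement $\wh{d}_\eps(i^*,P,\sigma)=c\eps$ follows immediately, and I would close by noting this is precisely the linear term of $\wpe$'s defining polynomial on $(0,1)$, consistent with the definition of $\wh d_\eps$ given just before Problem (RTFP).
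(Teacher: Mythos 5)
Your proposal is correct and follows the paper's intended route exactly: the paper gives no explicit proof, stating only that Lemma~\ref{lemm_3} (together with Corollary~\ref{coro_1}) yields the result, and your two-sided argument --- the minimizing perturbation flips some $c'\le c$ crucial matches so the drop's linear term is $c'\eps$, while flipping all of $C$ shows $\wpe(i^*,P,\sigma)\leq 1-c\eps+\eps^2\tilde Q(\eps)$ and hence forces $c'=c$ --- is precisely the missing elaboration. The monotonicity digression in your second paragraph is unnecessary, as you yourself note, since Lemma~\ref{lemm_3} applied with $S=C$ already gives the needed upper bound on $\wpe$ directly.
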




\noindent We now show that crucial matches can be found efficiently.

\begin{lemma} Given a complete deterministic tournament $C(P,\sigma)$, there exists an $O(N\log N)$ algorithm that computes the number of crucial matches.
\end{lemma}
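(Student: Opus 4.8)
The plan is to identify crucial matches structurally and count them with a single recursive pass over the tournament tree. First I would set up notation: walk up the path from the leaf of player $i^\star$ to the root; along this path $i^\star$ plays $n=\log N$ matches, against the winners of the $n$ sibling subtrees $T_1,\dots,T_n$ hanging off that path (ordered from the first round up). For the matches that $i^\star$ plays personally, each such match is crucial precisely when flipping its single outcome makes $i^\star$ lose --- which in the deterministic setting is automatic, since flipping means $i^\star$ is eliminated. So all $n$ matches along the winner's path are crucial, contributing $n=\log N$ to the count. The remaining crucial matches live entirely inside the sibling subtrees $T_1,\dots,T_n$, and there a match is crucial iff flipping it changes which player emerges as the winner of that subtree \emph{and} that change propagates all the way to dethrone $i^\star$.

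Next I would observe that a match inside a subtree $T$ hanging off $i^\star$'s path is crucial if and only if (i) flipping it changes the winner of $T$ (call such a match \emph{decisive} for $T$), and (ii) the new winner of $T$ would beat $i^\star$ in the match where $i^\star$ meets $T$'s winner. Condition (ii) is a property only of which alternative player can reach the top of $T$; since the matrix is deterministic, once we know $T$'s actual winner $w$ and the "runner-up lineage", (ii) is a simple comparison $P_{?, i^\star}$. The key combinatorial fact to nail down is: for any subtree $T$ with winner $w$, the matches whose single flip changes the winner of $T$ are exactly the matches played by $w$ on $w$'s own path to the top of $T$ (flipping any one of them removes $w$ and installs the opponent in that match, who then wins the rest of $T$ unchanged because it beat everyone it faced); flipping any match not on $w$'s path cannot change $T$'s winner, because $w$ still marches to the top. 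So within $T$, the decisive matches are exactly those on the winner's path inside $T$, and for each one the "alternative winner of $T$" is the losing side of that match.

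With this, the algorithm is: compute the winner of every subtree by one bottom-up pass in $O(N)$ time (charging $O(1)$ per internal node). Then, for the $n$ subtrees $T_1,\dots,T_n$ hanging off $i^\star$'s path, walk down the winner's path inside each $T_j$ (total length at most $\log N$ for each, and the paths are disjoint across rounds, so $O(N)$ total in the worst case, but more carefully $\sum_j |T_j\text{'s winner path}| = O(N)$ since the subtree sizes are $1,2,4,\dots$), and at each step check condition (ii) by a single matrix lookup: does the loser of that match beat $i^\star$? Count the matches passing the test. Add $n=\log N$ for the matches on $i^\star$'s own path. The sum is the number of crucial matches, and by Corollary~\ref{coro_2} it is exactly $\wh d_\eps(i^\star,P,\sigma)/\eps$. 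Total running time is dominated by the winner-computation pass, $O(N)$, well within $O(N\log N)$.

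The main obstacle I anticipate is rigorously justifying the "decisive matches = winner's path" claim and, more subtly, that \emph{single} flips suffice --- i.e., that a match being crucial is correctly captured by flipping that match alone (which is exactly how "crucial" is defined in the excerpt), so there is no interaction to worry about; the care is in arguing that flipping an off-path match really leaves $T$'s winner untouched, which uses the spawning-free generality of an arbitrary deterministic matrix: the winner $w$ of $T$ beats \emph{every} player it actually faced in $T$, so as long as $w$ still reaches each of those matches it still wins them. A flip off $w$'s path may change who $w$ faces at some later round, so this needs a short induction: if the flip is in the sibling subtree that feeds into one of $w$'s matches, the new opponent presented to $w$ is some player $w'$, and we must check $w$ still beats $w'$; this is \emph{not} automatic and means such an off-path flip could in principle be crucial after all. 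Resolving this correctly --- perhaps by redefining "decisive for $T$" to also sweep in flips that change $w$'s opponent to someone who beats $w$ --- is the delicate part, and the recursion must be set up so that each internal node contributes $O(1)$ work while correctly aggregating "which alternative players can be made to win this subtree by a single flip somewhere below it", keeping the total at $O(N\log N)$.
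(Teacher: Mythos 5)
There is a genuine gap, and you in fact identify it yourself in your last paragraph: the ``key combinatorial fact'' on which your whole count rests --- that the only single flips changing the winner of a subtree $T$ are the flips of matches on the winner's own path inside $T$ --- is false for a general deterministic comparison matrix, because the beating relation need not be transitive. Concretely, let $T$ have four players $w,a,b,c$ drawn as $(w,a,b,c)$, with $w$ beating $a$, $b$ beating $c$, $w$ beating $b$, but $c$ beating $w$. The winner of $T$ is $w$ and the off-path match is $b$ vs.\ $c$; flipping that one match sends $c$ to the top of $T$'s right half, and $c$ then beats $w$, so this single off-path flip changes $T$'s winner (and can therefore be crucial for $i^\star$). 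Your algorithm, which walks only the winner's path inside each sibling subtree $T_j$, would never examine this match and would undercount the crucial matches. The sketch you offer for repairing this (``redefine decisive to also sweep in flips that change $w$'s opponent to someone who beats $w$'') is exactly the hard part and is left unresolved: a single flip deep inside $T_j$ can change the opponent presented to the local winner at \emph{every} level above it, so aggregating ``which alternative players can be made to reach the top of this subtree by one flip below'' in $O(1)$ per node is not established.

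The irony is that no structural characterization is needed to meet the stated bound. The paper's proof is the obvious brute force: first compute the winner of every internal node bottom-up in $O(N)$; then, for each of the $N-1$ matches in turn, flip that single match and recompute the winners only along the $O(\log N)$ ancestors of that match ($O(1)$ work per ancestor, since all other stored winners are unaffected), checking whether the root label changes. This tests cruciality of each match directly from its definition and costs $O(N\log N)$ total. Your attempted optimization both introduces the incorrect claim and buys nothing, since $O(N\log N)$ is all that is asked for.
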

\begin{proof} Going from below, at each node of the tournament (including the root) we store the winner of the corresponding subtournament. Since $N-1$ matches are played in total, this takes $O(N)$.

Now imagine a single match corresponding to a subtree $S$ was switched. In order to determine if the original winner still wins we might need to recompute the winners of all the subtournaments containing $S$. Since the depth of the tree is $\log(N)$, there are at most $\log(N)$ such tournaments and recomputing the winner of each takes constant time. Overall we get the running time $O(N+N\cdot\log N)=O(N\log N)$.
\end{proof}

This proves the deterministic case of Theorem~\ref{thm1}.

\subsection{Probabilistic setting}
The idea is similar to the deterministic case. By Lemma~\ref{perturbalot}, we want to adjust each match in one of the two directions as much as possible. Since $\wp(1,P,\sigma)$ is a polynomial linear in each $P_{ij}$, for any fixed $P_{ij}$ we can express it as $$\wp(1,P,\sigma)=\alpha_{ij}\cdot P_{ij}+\beta_{ij}.$$
The next lemma shows that in order to obtain the approximation $\wh{d}_\eps(1,P,\sigma)$ of the $\eps$-worst drop, we want to adjust $P_{ij}$ based on the sign of $\alpha_{ij}$ and that in the end we get $\wh{d}_\eps(1,P,\sigma)=\sum_{i,j} |\alpha_{ij}|$. 


\begin{lemma}\label{taylor} Let $C(P,\sigma)$ be a complete probabilistic tournament. 
 Pick $1\leq i\neq j\leq N$ and consider $\wp(1,P,\sigma)$ as a function of $P_{ij}$. Then $\wp(1,P,\sigma)=\alpha_{ij}\cdot P_{ij}+\beta_{ij}$ for suitable constants $\alpha_{ij},\beta_{ij}\in\er$ and $\wh{d}_\eps(1,P,\sigma)=\sum_{1\leq i\neq j\leq N} |\alpha_{ij}|$ for all sufficiently small $\eps>0$.
\end{lemma}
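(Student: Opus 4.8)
## Proof proposal for Lemma~\ref{taylor}

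My plan is to prove the two claims separately: first the algebraic decomposition $\wp(1,P,\sigma)=\alpha_{ij}P_{ij}+\beta_{ij}$, and then the formula $\wh d_\eps(1,P,\sigma)=\sum_{i\ne j}|\alpha_{ij}|$.

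\textbf{Step 1: the linear decomposition.} The first claim is essentially already established in the proof of Lemma~\ref{perturbalot}: the winning probability $\wp(1,P,\sigma)$ is a polynomial in the variables $\{P_{ij}\}$ that is multilinear (affine-linear in each variable separately), because each profile of match outcomes either produces $i^*=1$ as winner or not, and the probability of each such profile is a product of the relevant entries $P_{ij}$ (or $1-P_{ij}$), in each of which it is degree at most one. Fixing all variables except $P_{ij}$ therefore yields an affine function $\alpha_{ij}P_{ij}+\beta_{ij}$; the coefficients $\alpha_{ij},\beta_{ij}$ depend on the remaining (fixed) entries of $P$ but are real constants once $P$ is given. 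I would note that $\alpha_{ij}=\partial\,\wp(1,P,\sigma)/\partial P_{ij}$, which will be convenient below, and that by the constraint $P_{ji}=1-P_{ij}$ we treat each unordered pair $\{i,j\}$ with a single free variable, say $P_{ij}$ with $i<j$ (or, as the statement writes it, sum over ordered pairs and divide — I should be careful about the bookkeeping here and match whatever convention the statement's sum ranges over).

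\textbf{Step 2: identify the worst $\eps$-perturbation to first order.} By Lemma~\ref{perturbalot}, there is a worst-case perturbation $P'\in\p(P,\eps)$ with $\wp(1,P',\sigma)=\wpe(1,P,\sigma)$ in which each entry is moved either by exactly $\eps$ or all the way to $\{0,1\}$. For $\eps\le\xi(P)$ the latter cannot happen on any nonzero-and-nonone entry, so (up to the deterministic entries, which cannot be moved without leaving $\p(P,\eps)$, or rather can only be moved inward by $\eps$) every perturbable entry is shifted by exactly $\pm\eps$. The key observation is that, to first order in $\eps$, the effect of the joint perturbation decouples: writing $P'_{ij}=P_{ij}+\eps\,\delta_{ij}$ with $\delta_{ij}\in\{+1,-1\}$, a multilinear Taylor expansion gives
$$\wp(1,P',\sigma)=\wp(1,P,\sigma)+\eps\sum_{i<j}\alpha_{ij}\,\delta_{ij}+\eps^2\cdot(\text{polynomial in }\eps),$$
since all second- and higher-order partial derivatives are the only source of $\eps^2$ terms and they are bounded constants. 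To minimize the linear term we choose $\delta_{ij}=-\operatorname{sgn}(\alpha_{ij})$ for every pair, which is a legal choice of perturbation directions; this yields linear coefficient $-\sum_{i<j}|\alpha_{ij}|$, and no other sign pattern does better at first order. Hence $\wpe(1,P,\sigma)=\wp(1,P,\sigma)-\eps\sum_{i<j}|\alpha_{ij}|+\eps^2 Q(\eps)$, so the linear term of the drop is exactly $\sum_{i<j}|\alpha_{ij}|$ (equivalently the sum over all ordered pairs as written, if $\alpha_{ij}$ is defined per ordered pair, since $\alpha_{ij}=-\alpha_{ji}$ makes $|\alpha_{ij}|=|\alpha_{ji}|$ and one must then halve — I will fix the constant to agree with the statement once I pin down the convention). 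By definition $\wh d_\eps$ is this linear term, giving the claim.

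\textbf{Main obstacle.} The routine part is the multilinear Taylor expansion; the genuinely delicate point is justifying that the \emph{sign-greedy} choice of perturbation directions is globally optimal \emph{for the purpose of the linear term}, rather than merely stationary — i.e. that no interaction between coordinates, and no boundary effect from the deterministic entries, can beat $-\sum|\alpha_{ij}|$ in the coefficient of $\eps$. This follows because the linear functional $\delta\mapsto\sum\alpha_{ij}\delta_{ij}$ separates over coordinates and is minimized coordinatewise over the cube $[-1,1]^{\binom{N-1}{2}}$ (whose vertices are exactly the allowed $\pm1$ sign patterns from Lemma~\ref{perturbalot}), but I need to argue cleanly that the $\eps^2$ remainder, which \emph{does} couple coordinates, is genuinely $O(\eps^2)$ uniformly — that is, its coefficients are constants independent of $\eps$ — so it cannot contaminate the $\eps^1$ coefficient. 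The boundedness of all partial derivatives of the polynomial $\wp(1,\cdot,\sigma)$ on the compact set $\p(P,\eps)$ handles this, and I would also remark (as the surrounding text does) that for $\eps$ small enough — below $\xi(P)$ and small relative to the sizes of the $\alpha_{ij}$ — the chosen greedy perturbation is actually the true argmin, so $\wh d_\eps$ really does approximate $d_\eps$ up to $O(\eps^2)$.
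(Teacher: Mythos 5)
Your proposal is correct and follows essentially the same route as the paper's proof: multilinearity of $\wp(1,P,\sigma)$ in the entries $P_{ij}$ gives the affine decomposition, Lemma~\ref{perturbalot} reduces the worst perturbation to $\pm\eps$ shifts on each coordinate, and a multivariate Taylor expansion plus the coordinatewise sign-greedy choice yields the linear coefficient $\sum|\alpha_{ij}|$. Your extra care about the ordered-pair bookkeeping and the deterministic entries only makes explicit what the paper handles by convention (the paper simply assumes $P_{ij}\notin\{0,1\}$ ``for notational convenience'').
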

\begin{proof} Consider $\wp(1,P,\sigma)$ as a function $f\colon (0,1)^{N\choose 2}\to (0,1)$ of variables $P_{ij}$, $1\leq i \neq j\leq N$. Since each combination of outcomes of the matches determines if player 1 won or not, $f$ is a polynomial that is linear in each $P_{ij}$. Plugging in the actual values for all of them but one, we get the desired $\wp(1,P,\sigma)=\alpha_{ij}\cdot P_{ij}+\beta_{ij}$ for suitable constants $\alpha_{ij},\beta_{ij}\in\er$.

For notational convenience, assume that $P_{i,j}\not\in\{0,1\}$ for each $1\leq i\neq j\leq N$. \cref{perturbalot} then implies that for sufficiently small $\eps>0$, there is a worst $\eps$-perturbation $P'$ such that $|P'_{ij}-P_{ij}|=\eps$ for all $1\leq i \neq j\leq N$. Let
$$ e_{ij}=\begin{cases}
-1&\mbox{if } P'_{ij}-P_{ij}=\eps \\
+1&\mbox{if } P'_{ij}-P_{ij}=-\eps
\end{cases}
$$
By the multivariate version of Taylor's theorem we get
$$\wp(1,P',\sigma) = \wp(1,P,\sigma) -\eps\sum_{ij} e_{ij}\alpha_{ij} +\eps^2Q(\eps)
$$
for some polynomial $Q(\eps)$. Clearly the maximum of $\sum_{ij} e_{ij}\alpha_{ij}$ is attained if each $e_{ij}$ has the same sign as the corresponding $\alpha_{ij}$ (for $\alpha_{ij}=0$ the value $e_{ij}$ doesn't matter) and it equals $\sum_{ij} |\alpha_{ij}|$.
\end{proof}

Finally, we establish Lemma~\ref{probalgo} that covers the probabilistic case of Theorem~\ref{thm1}.

\begin{lemma}\label{probalgo} Let $C(P,\sigma)$ be a complete probabilistic tournament.
 Then $\wh{d}_\eps(1,P,\sigma)$ can be computed in polynomial time.
\end{lemma}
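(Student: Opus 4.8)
\textbf{Proof plan for Lemma~\ref{probalgo}.}
The plan is to compute, for every pair $1\le i\ne j\le N$, the coefficient $\alpha_{ij}$ appearing in \cref{taylor}, and then return $\sum_{i\ne j}|\alpha_{ij}|$. By \cref{taylor} this sum equals $\wh d_\eps(1,P,\sigma)$, so it suffices to extract each $\alpha_{ij}$ in polynomial time. Since $\alpha_{ij}$ is the partial derivative $\partial\wp(1,P,\sigma)/\partial P_{ij}$ (the function being affine in $P_{ij}$), and since there are only $\binom{N}{2}=O(N^2)$ pairs, it is enough to evaluate each such partial derivative in polynomial time; a total running time of, say, $O(N^4)$ then follows.

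The key observation is that the match between the leaf-subtrees that would pit $i$ against $j$ occurs, if at all, at a single fixed internal node $v$ of the tournament tree $T([N],\sigma)$ — namely at the lowest common ancestor of the leaves labelled $i$ and $j$ — and players $i$ and $j$ can only meet there. So I would proceed as follows. First, using the $O(N^2)$ recursion of~\cite{VAS09}, compute at every node $u$ of the tree the full vector of probabilities $\pi_u(k)=$ (probability that player $k$ labels $u$), for all $k$ that are leaves of the subtree rooted at $u$; storing all these vectors costs $O(N^2)$ space and time. Now fix a pair $i,j$ meeting at node $v$ with children $v_L$ (the subtree containing leaf $i$) and $v_R$ (the subtree containing leaf $j$). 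Write $\wp(1,P,\sigma)$ by conditioning on who emerges from $v$: it equals $\sum_{k} \pi_v(k)\cdot g_v(k)$, where $g_v(k)$ is the probability that player~$1$ wins the whole tournament given that $k$ labels $v$; crucially $g_v(k)$ does \emph{not} depend on $P_{ij}$ (it only involves matches at $v$'s proper ancestors, which do not involve $i$ vs.\ $j$), and it can likewise be computed for all relevant $k$ by a downward pass. The only part of $\wp(1,P,\sigma)$ that depends on $P_{ij}$ is the contribution through $\pi_v(i)$ and $\pi_v(j)$: we have $\pi_v(i)=\pi_{v_L}(i)\bigl(P_{ij}\,\pi_{v_R}(j)+\sum_{k\in v_R,\,k\ne j}P_{ik}\,\pi_{v_R}(k)\bigr)$ and symmetrically for $\pi_v(j)$, and all other $\pi_v(k)$ are independent of $P_{ij}$. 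Differentiating, $\partial\pi_v(i)/\partial P_{ij}=\pi_{v_L}(i)\,\pi_{v_R}(j)$ and $\partial\pi_v(j)/\partial P_{ij}=-\pi_{v_R}(j)\,\pi_{v_L}(i)$, so
$$\alpha_{ij}=\pi_{v_L}(i)\,\pi_{v_R}(j)\cdot\bigl(g_v(i)-g_v(j)\bigr).$$
Each quantity on the right is already precomputed, so $\alpha_{ij}$ is obtained in $O(1)$ time, and $\wh d_\eps=\sum_{i\ne j}|\alpha_{ij}|$ in $O(N^2)$ more.

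A small technical point to address, inherited from the proof of \cref{taylor}, is that the clean formula there was stated ``for notational convenience'' under the assumption $P_{ij}\notin\{0,1\}$; I would remark that the derivation of $\alpha_{ij}$ above is purely algebraic and valid regardless, and that when some entries are $0$ or $1$ \cref{perturbalot} still guarantees a worst perturbation of the boundary form, so that $\wh d_\eps=\sum|\alpha_{ij}|$ continues to hold (the linear term of the polynomial $\wpe$ is exactly this sum). The main obstacle — really the only subtlety — is organising the bookkeeping so that the ``conditional win probabilities'' $g_v(k)$ for \emph{all} nodes $v$ and all candidates $k$ are computed within the overall polynomial budget; this is handled by a single top-down sweep that mirrors the bottom-up sweep of~\cite{VAS09}, maintaining at each node the distribution over who could arrive from outside the subtree, so the total cost stays $O(N^2)$ per level and hence polynomial overall.
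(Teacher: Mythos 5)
Your proposal is correct and rests on the same reduction as the paper: both invoke Lemma~\ref{taylor} to write $\wh{d}_\eps(1,P,\sigma)=\sum_{i\neq j}|\alpha_{ij}|$ and then compute the coefficients $\alpha_{ij}$. Where you differ is in how the $\alpha_{ij}$ are extracted. The paper reruns the $O(N^2)$ recursion of~\cite{VAS09} once per pair, carrying the winning probability symbolically as a linear function of the single variable $P_{ij}$, for a total of $O(N^4)$. You instead observe that $i$ and $j$ can only meet at the lowest common ancestor $v$ of their leaves, and derive the closed form $\alpha_{ij}=\pi_{v_L}(i)\,\pi_{v_R}(j)\,(g_v(i)-g_v(j))$ from one bottom-up pass (the $\pi$'s) and one top-down pass (the conditional win probabilities $g$); this is a genuine refinement that brings the total cost down to $O(N^2)$, and the derivation is sound -- matches at proper ancestors of $v$ and matches inside $v_L$ or $v_R$ indeed never involve $P_{ij}$, so only $\pi_v(i)$ and $\pi_v(j)$ depend on it.

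One caveat: your closing remark that $\wh{d}_\eps=\sum_{i\neq j}|\alpha_{ij}|$ ``continues to hold'' when some $P_{ij}\in\{0,1\}$ is not quite right. For a boundary entry the perturbation is one-sided, so if, say, $P_{ij}=1$ and $\alpha_{ij}<0$, the worst perturbation leaves $P_{ij}$ untouched and that pair contributes $0$ rather than $|\alpha_{ij}|\eps$ to the linear term of the drop; the correct contribution is the positive part $(\alpha_{ij})^+$ when $P_{ij}=1$ and the negative part $(-\alpha_{ij})^+$ when $P_{ij}=0$, consistent with Lemma~\ref{perturbalot}. This does not affect your algorithm or the polynomial-time claim -- one simply replaces $|\alpha_{ij}|$ by the appropriate one-sided quantity, which your formula computes equally well -- and the paper itself elides the same point by assuming interior entries ``for notational convenience,'' but the sentence as you wrote it is false as a general identity.
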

\begin{proof} By \cref{taylor}, for all sufficiently small $\eps>0$ we have $\wh{d}_\eps(1,P,\sigma)=\sum_{1\leq i\neq j\leq N} |\alpha_{ij}|$, where each $\alpha_{ij}$ is given by  $\wp(1,P,\sigma)=\alpha_{ij}\cdot P_{ij}+\beta_{ij}$.

There exists a recursive $O(N^2)$ algorithm that computes the winning probability of a player (see~\cite{VAS09}). This algorithm can be naturally modified to find $\alpha_{ij}$, $\beta_{ij}$ for some fixed $1\leq i\neq j\leq N$. Indeed, instead of computing the winning probability as a number we compute it as a linear function of one variable $P_{ij}$. Repeating this procedure for all $\binom{N}{2}$ parameters $P_{ij}$ gives all $\alpha_{ij}$ and in turn also $\wh{d}_\eps(1,P,\sigma)$ in $O(N^2\cdot N^2)=O(N^4)$ time.
\end{proof}


\vspace{-0.5em}
\subsection{Consequences}
\vspace{-0.2em}
Our main result has a number of important consequences which we present below.

\begin{corollary}\label{coro1}
The RTFP and the RPTFP problems are NP-complete.
\end{corollary}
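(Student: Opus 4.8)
The plan is to prove NP-completeness of RTFP and RPTFP by combining the hardness of TFP/PTFP with our polynomial-time drop-approximation algorithm (Theorem~\ref{thm1}). For membership in NP, a draw $\sigma$ is a polynomial-size witness: given $\sigma$ one verifies condition~(a) by computing $\wp(i^*,P,\sigma)$ in $O(N^2)$ time via the algorithm of~\cite{VAS09}, and one verifies condition~(b) by computing $\wh{d}_\eps(i^*,P,\sigma)$ in polynomial time via Theorem~\ref{thm1} and checking that it is at most $c\eps$ (resp.\ $s\eps$); since $\wh d_\eps$ is linear in $\eps$ this last check is just a comparison of the coefficient with $c$ (resp.\ $s$). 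So both problems lie in NP.

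For NP-hardness I would reduce from TFP (for RTFP) and PTFP (for RPTFP). The key observation is that RTFP with the robustness bound chosen vacuously large collapses to TFP. Concretely, given a TFP instance $(N,P,i^*)$, output the RTFP instance $(N,P,i^*,c)$ with $c=\binom{N}{2}$ (or any value $\ge$ the total number of matches $N-1$). By Corollary~\ref{coro_2}, if a draw $\sigma$ makes $i^*$ win then $\wh d_\eps(i^*,P,\sigma)$ equals the number of crucial matches times $\eps$, which is at most $N-1 \le c$, so condition~(b) is automatically satisfied whenever condition~(a) is. Hence the RTFP instance is a yes-instance iff the original TFP instance is. The analogous reduction works for RPTFP from PTFP: given $(N,P,i^*,q)$, output $(N,P,i^*,q,s)$ with $s=\binom{N}{2}$; by Lemma~\ref{taylor}, $\wh d_\eps(i^*,P,\sigma)=\sum_{i\neq j}|\alpha_{ij}|$, and since each $|\alpha_{ij}|\le 1$ (as $\wp$ is a probability, linear in $P_{ij}$ with slope of absolute value at most~$1$) the sum is at most $\binom{N}{2}=s$, so condition~(b) holds for every draw, and the RPTFP instance is a yes-instance iff the PTFP instance is. Both reductions are clearly polynomial-time computable.

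The only mild subtlety — and the place where one must be slightly careful rather than facing a real obstacle — is justifying the uniform bound on the coefficients so that condition~(b) is genuinely vacuous: for the deterministic case this is immediate from Corollary~\ref{coro_2} and the fact that there are only $N-1$ matches in the tournament tree, and for the probabilistic case one notes that $\alpha_{ij}=\wp(1,P,\sigma)\big|_{P_{ij}=1}-\wp(1,P,\sigma)\big|_{P_{ij}=0}$ is a difference of two probabilities and hence lies in $[-1,1]$. With these bounds in hand, the reductions preserve yes/no answers, and combined with the NP-membership argument above this shows RTFP and RPTFP are NP-complete.
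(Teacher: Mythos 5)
Your proposal is correct and takes essentially the same approach as the paper: NP-membership via the draw as a polynomial witness, checked using the algorithm of~\cite{VAS09} for condition~(a) and Theorem~\ref{thm1} for condition~(b), and NP-hardness by reducing from TFP with the robustness threshold chosen large enough (the paper uses $c=N+1$, you use $\binom{N}{2}$) that condition~(b) is vacuously satisfied. The only cosmetic difference is that you give a second explicit reduction from PTFP for RPTFP, whereas the paper simply notes that RTFP is a special case of RPTFP.
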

\begin{proof}
The hardness result is an easy consequence of the NP-hardness of the
TFP problem~\cite{AGMM}, which is obtained as follows.
Consider the RTFP problem (which is a special case of RPTFP problem) with
$c=N+1$ with sufficiently small $\eps>0$. Then for every draw $\sigma$ we 
have $\wh{d}_{\eps}(i^*,P,\sigma) \leq c\eps$; and hence the answer to 
the RTFP problem is yes iff the answer to the TFP problem is yes. 
Hence both RTFP and RPTFP problems are NP-hard.
Since every draw is a polynomial witness, both conditions (a) and (b)
for RPFTP can be checked in polynomial time by the results of~\cite{VAS09}
and Theorem~\ref{thm1}, respectively.
\end{proof}

In~\cite{AGMM} two important special cases are considered for TFP, namely,
tournaments where there are constant number of types of players, and 
tournaments with linear ordering on players with constant number of exceptions.
For both the cases, polynomial-time algorithms are presented in~\cite{AGMM}
to enumerate all winning draws. 
In combination with Theorem~\ref{thm1} it follows that the most robust 
winning draw can also be approximated in polynomial time.

\begin{corollary}
For the two special cases of TFP from~\cite{AGMM}, the most robust 
winning draw (if a winning draw exists) can be approximated in polynomial time.
\end{corollary}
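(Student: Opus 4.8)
The plan is to combine the two ingredients already established: on one hand, Theorem~\ref{thm1} says that for any fixed complete deterministic tournament $C(P,\sigma)$ we can compute $\wh{d}_\eps(i^*,P,\sigma)$ in polynomial time (in fact, by Corollary~\ref{coro_2}, $\wh{d}_\eps(i^*,P,\sigma) = c\eps$ where $c$ is the number of crucial matches, and $c$ is computable in $O(N\log N)$ time); on the other hand, for each of the two special cases of~\cite{AGMM} there is a polynomial-time procedure that enumerates \emph{all} winning draws for the distinguished player $i^*$. The idea is simply to run the enumeration, evaluate $\wh{d}_\eps$ on each enumerated winning draw, and output a draw minimizing this quantity (equivalently, minimizing the number of crucial matches).

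First I would recall precisely what ``enumerate all winning draws'' means in~\cite{AGMM}: for the constant-number-of-types case and the linear-order-with-constant-exceptions case, the set of winning draws can be represented so that it has polynomially many equivalence classes (players of the same type, or in the same position of the linear order, are interchangeable), and one can produce a polynomial-size list of \emph{representative} winning draws $\sigma_1,\dots,\sigma_m$ with $m = \mathrm{poly}(N)$ such that every winning draw is isomorphic to some $\sigma_k$. Crucially, isomorphic complete tournaments have the same winning probabilities and the same number of crucial matches, hence the same $\wh{d}_\eps$, so it suffices to optimize over the representatives. Next, for each representative $\sigma_k$, apply Corollary~\ref{coro_2} (via the $O(N\log N)$ crucial-match counting algorithm from Theorem~\ref{thm1}) to compute $c_k$, the number of crucial matches in $C(P,\sigma_k)$, so that $\wh{d}_\eps(i^*,P,\sigma_k)=c_k\eps$. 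Finally, output $\arg\min_k c_k$; this draw is winning and has the smallest (approximated) $\eps$-worst drop among all winning draws, which is exactly the most robust winning draw in the sense defined before the RTFP problem. If no winning draw exists, the enumeration returns the empty list and we report this. The total running time is $m \cdot O(N\log N)$ plus the cost of the enumeration from~\cite{AGMM}, which is polynomial.

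The main obstacle — and the only point requiring care — is justifying that the enumeration of~\cite{AGMM} really yields a polynomial-size set of representatives closed under the isomorphism relation used to define draws, rather than merely deciding the existence of a single winning draw. Concretely, one must check that in each of the two special cases the winning condition depends only on the multiset of types (or order-positions) assigned to each subtree, so that the search tree over ``how many players of each type go into each subtree'' has polynomial size, and that from each leaf of this search tree one can read off a concrete representative draw. This is implicit in the structure of the algorithms of~\cite{AGMM}, but it should be stated explicitly. Beyond that, everything is routine: the correctness of $\wh{d}_\eps = c\eps$ is Corollary~\ref{coro_2}, the computability of $c$ is Theorem~\ref{thm1}, and minimizing over a polynomial-size list is trivial. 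I would also remark that the same argument immediately gives a polynomial-time algorithm for the decision problem RTFP restricted to these two special cases, since one just compares the minimum $c_k$ against the input threshold $c$.
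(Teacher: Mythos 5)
Your proposal matches the paper's own argument: the paper likewise combines the polynomial-time enumeration of all winning draws from~\cite{AGMM} with Theorem~\ref{thm1} (via the crucial-match count of Corollary~\ref{coro_2}) to evaluate $\wh{d}_\eps$ on each winning draw and select the minimizer. Your extra care about the enumeration producing polynomially many representatives up to isomorphism is a reasonable point the paper leaves implicit by citing~\cite{AGMM}, but it does not change the approach.
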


\section{Conclusion}
We studied the problem of robustness related to fixing draws in
BKT.
We presented several illuminating examples related to the robustness 
properties of optimal and near optimal draws.
We established polynomial-time algorithm to approximate the robustness of draws 
for sufficiently small $\eps>0$.
As a consequence, for the robustness of draws, 
we establish NP-completeness in general and polynomial-time algorithms
for special cases.
Interesting directions of future work are (a)~computation of
robustness when $\eps>0$ is not small;
(b)~higher order uncertainties, such as uncertainty in $\eps$ or different $\eps$'s for different matches;
and
(c)~the robustness question in other forms of tournaments.
 
 \subsection*{Acknowledgements}
This research was partly supported by Austrian Science Fund (FWF) NFN Grant No S11407-N23 (RiSE/SHiNE), Vienna Science and Technology Fund (WWTF) through project ICT15-003, ERC Start grant (279307: Graph Games), and ERC Advanced Grant (267989: QUAREM).

\bibliographystyle{plain}
\bibliography{ijcai16}

\end{document}